\newtheorem{theorem}{Theorem}
\newtheorem{lemma}{Lemma}
\newtheorem{proposition}{Proposition}
\theoremstyle{definition}
\newtheorem{definition}{Definition}
\DeclareMathOperator{\poly}{poly}
\def\expandafter\UrlBreaks\expandafter{\UrlBreaks
  \do\a\do\b\do\c\do\d\do\e\do\f\do\g\do\h\do\i\do\j%
  \do\k\do\l\do\m\do\n\do\o\do\p\do\q\do\r\do\s\do\t%
  \do\u\do\v\do\w\do\x\do\y\do\z\do\A\do\B\do\C\do\D%
  \do\E\do\F\do\G\do\H\do\I\do\J\do\K\do\L\do\M\do\N%
  \do\O\do\P\do\Q\do\R\do\S\do\T\do\U\do\V\do\W\do\X%
  \do\Y\do\Z}
\def\EQ#1{\begin{eqnarray}#1\end{eqnarray}}
\def\ket#1{\left| #1 \right\rangle}
\newcommand\mydots{\hbox to 1em{.\hss.\hss.}}
\DeclareMathOperator{\enc}{Enc}
\DeclareMathOperator{\pos}{pos}
\DeclareMathOperator{\len}{len}
\DeclareMathOperator{\EffEnc}{EffEnc}
\def\ch{\textsf{Check}}
\def\extract{\textsf{Extract}}
\def\select{\textsf{Select}}
\def\union{\textsf{Union}}
\def\contains{\textsf{Contains}}
\def\shift{\textsf{Shift}}
\def\Min{\textsf{Min}}
\def\append{\textsf{Append}}
\def\calculate{\textsf{Calculate}}
\def\merge{\textsf{Merge}}
\def\qball{\textsc{QBall}}
\def\promiseball{\textsc{PromiseBall}}
\def\fastball{\textsc{FastBall}}
\newcommand\rcall{r_{\text{call}}}
\begin{document}
\title{Computational speedups using small quantum devices}
\date{\today}
\author{Vedran Dunjko}
\email{v.dunjko@liacs.leidenuniv.nl}
\address{Max Planck Institut f\"{u}r Quantenoptik, Hans-Kopfermann-Str. 1,
85748 Garching, Germany}
\address{LIACS, Leiden University, Niels Bohrweg 1, 2333 CA Leiden, The Netherlands}
\author{Yimin Ge}
\email{yimin.ge@mpq.mpg.de}
\address{Max Planck Institut f\"{u}r Quantenoptik, Hans-Kopfermann-Str. 1,
85748 Garching, Germany}
\author{J. Ignacio Cirac}
\email{ignacio.cirac@mpq.mpg.de}
\address{Max Planck Institut f\"{u}r Quantenoptik, Hans-Kopfermann-Str. 1,
85748 Garching, Germany}

\begin{abstract}

Suppose we have a small quantum computer with only $M$ qubits. Can such a device genuinely speed up certain algorithms, even when the problem size is much larger than $M$? Here we answer this question to the affirmative. We present a hybrid quantum-classical algorithm to solve 3SAT problems involving $n\gg M$ variables that significantly speeds up its fully classical counterpart. This question may be relevant in view of the current quest to build small quantum computers.
\end{abstract}

\maketitle

Quantum computers can use the superposition principle to speed up certain computations. However, it is not clear if they can be useful when they are, as expected for the foreseeable future, limited in size. The reason is that classical and quantum algorithms alike typically exploit global structures of problems, and restricting superpositions to certain block sizes will break that structure. 
Thus, one may expect that for problems where arbitrarily sized quantum computers offer advantages, small quantum computers may end up being of no significant help given large inputs. 

In this paper we study this problem and show that this is not generally true: there are relevant algorithms which utilize the global structure, and where quantum computers significantly smaller than the problem size can offer significant speedups. More precisely, we focus on the famous algorithm of Sch\"{o}ning for boolean satisfiability, and present a modified hybrid quantum-classical algorithm which significantly outperforms its purely classical version even given small quantum computers.

Satisfiability (SAT) problems are among the basic computational problems and naturally appear in many contexts dealing with combinatorial optimization, like scheduling or planning tasks, as well as in statistical physics.
A prominent SAT problem is  
3SAT, which involves clauses with up to three literals. 3SAT is the canonical example of  so-called NP-complete problems, which are believed to be exponentially difficult even for quantum computers. Nevertheless, quantum computers can still accelerate their solution \cite{2004_Ambainis,2018_Ambainis}, and given their ubiquity, they may become one of the most important applications of quantum computers.
However, the best quantum algorithms, which essentially ``quantum-enhance'' classical SAT solvers \cite{2004_Ambainis}, require a large number of qubits, and are not directly applicable given a quantum computer of limited size.
There are a number of possibilities how one could use a limited-size quantum device. For instance, one could speed up smaller structure-independent subroutines which occur, \emph{e.g.,} in the preparation phases of algorithms (\emph{e.g.,} whenever search over few items is performed, one could utilize Grover's algorithm \cite{1996_Grover}). 
However, for obtaining ``genuine'' speedups, \emph{e.g,} those which interpolate between the runtimes of the fully classical and a fully quantum algorithm, according to the size of the available quantum device, one should attack the actual computational bottlenecks. 
As we will show, if this is done in a straightforward fashion, one may encounter a {\em threshold effect}: if the quantum device is too small, \emph{i.e.}, can handle only a small fraction of the entire instance,
a na\"{i}ve hybrid algorithm turns out to be \emph{slower} than its classical version.

Here we demonstrate how this effect can sometimes be circumvented, in the context of satisfiability problems.
Specifically, we provide a quantum-assisted version of a well-understood classical algorithm, which achieves genuine improvements given quantum computers of basically any size, avoiding the aforementioned threshold effect \footnote{Our scenario is related to, but distinct from, the work in \cite{2016_Bravyi}. There, the authors consider how a classical computer can help simulate a given quantum algorithm which requires only slightly more qubits than are available. In contrast, we change the algorithm, \emph{i.e.},  propose hybrid algorithms which use a significantly smaller quantum device, to speed up a fixed classical algorithm. The two approaches are also complementary in their applicability: the main results 
 of \cite{2016_Bravyi} 
  are most powerful for shallow computations. In particular, their results could only be directly applied for SAT solving if one has access to an exponentially sized quantum computer to begin with, as the computation may be exponentially deep.}. Our results are also applicable to  broader $k$SAT problems, and more generally highlight the characteristics of classical algorithms,
and certain algorithmic methods which can be exploited to provide threshold-free enhancements.  

\textit{3SAT problems:} In SAT problems, we are given a boolean formula $F: \{0,1\}^n \rightarrow \{0,1\}$ over $n$ binary variables $\mathbf{x} = (x_1, \ldots, x_n) \in \{0,1\}^n.$ The task is to find a so-called \emph{satisfying assignment} $\mathbf{x},$ \textit{i.e.}, fulfilling $F(\mathbf{x}) = 1$, if one exists. In 3SAT, $F$ is specified by a set of $L$ clauses $\{ C_j\}$, where each clause specifies three \emph{literals} $\{l^j_1,l^j_2,l^j_3, \}$. Each literal specifies one of the $n$ binary variables $(x)$ or a negated variable $(\overline{x})$, for instance $\{l_1, l_2,l_3\}$ could be $\{ x_3,  \overline{x_5}, x_8 \}$. An assignment of variables $\mathbf{x}$ thus specifies the values of all literals. $C_j$ is satisfied by $\mathbf{x}$ if any of the literals in $C_j$ attains the value 1. A formula $F$, written as $F(\mathbf{x}) = \bigwedge_{j=1}^{L} (l^j_1 \vee l^j_2 \vee l^j_3 )$ using standard logic operator notation, is satisfied by $\mathbf{x}$ if all its clauses are satisfied.

\textit{Classical algorithms:}
Many classical algorithms for 3SAT are known which are significantly faster than brute-force search. Their performance can usually be characterised by a constant $\gamma\in(0,1)$, meaning that the algorithm solves 3SAT in a runtime of $O^\ast(2^{\gamma n})$ \footnote{The $O^{\ast}$ notation suppresses the terms which contribute only polynomially, see, \textit{e.g.}, \cite{2011_Moser}.}. 
One of the best and most famous ones is the algorithm of Sch\"{o}ning \cite{1999_Schoning}. It initializes a random assignment of the variables, then repeatedly finds an unsatisfied clause, randomly selects one literal in that clause, and flips the corresponding variable.
This sampling algorithm terminates once a satisfying assignment is reached, or once this process is iterated $O(n)$ times. Sch\"{o}ning proved that the probability of this algorithm finding a satisfying assignment (if one exists) is at least $(3/4)^n$ which, by iteration, leads to a Monte Carlo algorithm with expected runtime $O^\ast(2^{\gamma_{0} n})$ with $\gamma_{0}:=\log_2(4/3)\approx 0.415$. A significant speedup of the classical algorithm is a reduction of its value of $\gamma$. To study the potential of small quantum computers, we investigate whether small devices suffice
to enhance the algorithm of Sch\"oning by achieving such a reduction. 

\textit{Straightforward hybrid algorithms for small quantum computers:} In \cite{2004_Ambainis} a quantum algorithm inspired by Sch\"{o}ning's method which exploits amplitude amplification \cite{2000_Brassard} was introduced. It solves 3SAT instances with $n$ variables in runtime $O^\ast(2^{{\gamma_{0}} n/2})$ and requires $\approx \beta n$ qubits for a $\beta>2$. 
Given a quantum computer with only $M\approx \beta m$ qubits ($m\ll n$), one has a few options how to create a hybrid algorithm to achieve speedups. A ``bottom up'' approach would be to use the quantum-enhanced algorithm as an $m$-variable instance solver, which is then used as a subroutine in an overarching classical algorithm. For instance, to tackle the problem of $n$ variables, one could sequentially go through all possible partial assignments of $n-m$ variables. Each partial assignment induces a SAT problem with $m$ variables, which could then be solved on the small quantum device. The runtime of such an algorithm is $O^\ast(2^{(n-m) + \gamma_{0} m/2})$ which highlights the \emph{threshold effect}: the hybrid algorithm becomes \emph{slower} than the classical algorithm of Sch\"oning 
if 
$m/n \lesssim 0.74$. 
Roughly speaking, the main problem of hybrid algorithms using a small device as a sub-instance solver is that they break the global structure of the problem exploited by the classical algorithm. This results in  hybrid algorithms whose runtimes interpolate \footnote{By ``interpolate'' we mean that the hybrid algorithm's runtime is characterized by a function $h(n,m)$ of the size of the quantum device (approx. $m$), and the instance size $n$, which is (strictly) monotonically decreasing in $m$, and which roughly attains the classical, and fully quantum runtimes for $m=1$ and $m=n,$ respectively.} between a fully quantum runtime, and something slower than the classical algorithm --  hence the resulting threshold effect. Alternatively, we use a ``top down'' approach, explored in the remainder of the paper, where the subroutines of the classical algorithm, which carry most of the computational work, are identified and quantum-enhanced.

\textit{Our results:} We present a hybrid algorithm that avoids the threshold effect. More precisely, given a quantum computer with $M=cn$ qubits, where $c\in(0,1)$ is an arbitrary constant, our algorithm solves 3SAT with $n$ variables in a runtime of $O^\ast(2^{(\gamma_{0}- f(c)+\varepsilon)n})$, where $f(c)~>~0$ is a constant and $\varepsilon$ can be made arbitrarily small (the details of the function $f(c)$  are given in the supplemental material).
This constitutes a polynomial speedup over Sch\"oning's algorithm 
for arbitrarily small $c$. 

\textit{Algorithm description:} 
Our hybrid algorithm is  
a quantum-assisted version of de-randomized variants of Sch\"oning's algorithm \cite{2002_Dantsin, 2011_Moser}, which we review first. 
 Given a bitstring $\mathbf{x}\in \{0,1\}^n$, let $B_r(\mathbf{x})$  denote the $r-$ball centered at $\mathbf{x},$ \textit{i.e.}, the set of all bitstrings $\mathbf{y}$ differing from $\mathbf{x}$ in at most $r$ positions (\textit{i.e.}, their \emph{Hamming distance} is  $\leq r$). Then, relying on results from coding theory, the space of possible assignments is covered by a number of $r-$balls. Given such a covering set, specified by the centers of the balls, the algorithm sequentially checks whether there exists a satisfying assignment within each of the $r$-balls. This \emph{space splitting} algorithm reduces SAT  to the problem of finding a satisfying assignment within an $r-$ball, called Promise-Ball-SAT (PBS). A deterministic algorithm $\textsc{PromiseBall}(F, r, \mathbf{x})$ for PBS was introduced in \cite{2002_Dantsin}. This is a simple, recursive divide-and-conquer algorithm: on input it takes a formula, specified by a set of clauses (with no more than three literals), a radius, and a center $\mathbf{x}$. The algorithm first checks 
some obvious conditions for (un)satisfiability 
(if $r\leq 0$ and $F(\mathbf{x}) = 0$ or if any clause is empty), or if 
$\mathbf x$ is a satisfying assignment. Otherwise, in the \emph{recursive step,} it finds the first unsatisfied clause $C,$ and calls \textsc{PromiseBall}$(F_{| l =1}, r-1, \mathbf{x})$  for every literal $l \in C$
(we assume that the variables, literals and clauses are enumerated in a pre-specified order). Here, $F_{| l =1}$ denotes the formula  obtained by setting the variable corresponding to $l$ to the value ensuring $l=1$, \textit{i.e.}, all clauses involving $l$ (or $\bar{l}$) are removed (or truncated) \footnote{If a literal appearing in a clause $C = \{l_1,l_2,l_3\}$ evaluates to 1, then $C$ is satisfied regardless of the other variable assignments, and can be removed from the formula. If a literal (say $l_1$) evaluates to 0, then for $C$ to be true we need to satisfy the truncated clause  $C = \{l_2,l_3\}$.}. 

This algorithm solves PBS in time $O^{\ast}(3^{r})$. For comparison, Sch\"{o}ning sampling solves PBS in time $O^{\ast}(2^{r})$.
The overall runtime of the space-splitting algorithm of \cite{2002_Dantsin} can be expressed as a function of the runtime of the PBS-solving subroutine (see supplemental material for details). What is relevant for us, however, is that whenever a PBS solver with runtime $O^{\ast}(2^{r})$ (\emph{e.g.}, randomized Sch\"{o}ning) is used in the space-splitting algorithm, we recover the original Sch\"{o}ning's runtime with $\gamma_{0}\approx 0.415$.

To see how a small quantum device can help, note that  every recursive call in $\textsc{PromiseBall}$ reduces $r$ by $1$. One idea is then to, at the moment $r$ becomes small enough, use a quantum algorithm for PBS instead of a classical recursive call. This leads to a general approach for using small quantum devices to speed up algorithms which recursively call themselves (or other subroutines) with ever 
decreasing instance sizes. 
We will refer to this as \textit{the standard hybrid approach}.

While this is a step in the right direction, there are two types of obstacles to consider. First, since $\textsc{PromiseBall}$ is significantly slower than the basic algorithm of Sch\"{o}ning, this still leads to a threshold problem: $M$ would have to be at least a large fraction of $n$ to gain an advantage. The second issue pertains to the quantum memory requirements of the quantum $\textsc{PromiseBall}$ algorithm, which involves a few technical steps. 
In particular, straightforward quantum implementations of $\textsc{PromiseBall}$ require $\Omega(n)$ ancillary qubits, even if $r$ is small, 
which is too much given the size of our quantum device. 
While it is not difficult to reduce this to $O(r\log n)$, the resulting hybrid algorithm, although avoiding a threshold, would have a very low, in fact sub-polynomial, quantum advantage in the limit of large instances, and not yield a true improvement in terms of $\gamma$.  However, even this problem can be circumvented, using  more involved memory structures combined with specialized algorithms which algorithmically delete unneeded information to reduce the qubit requirements. 
More precisely, we provide a quantum implementation of $\textsc{PromiseBall}$ which requires only $O(r \log(n/r)+r+\log n)$ qubits and which is quadratically faster than classical $\textsc{PromiseBall}$.  As the last step, we use these ideas in combination with a more recent and efficient classical deterministic algorithm for PBS \cite{2011_Moser}, leading to our final algorithm.  

\textit{Space and time efficient quantum algorithm for PBS:}
Note first that any (recursive) algorithm for PBS which `carries' the formulas as explicit inputs has to use sufficiently many bits to represent them. This already violates our constraints on the number of (qu)bits our algorithm should use. Thus, as a first step towards a space-efficient quantum algorithm for PBS, we specify a non-recursive (classical) variant of \textsc{PromiseBall} which does not manipulate the formulas explicitly. Afterwards, we optimize the memory required by the algorithm and turn it into a quantum algorithm using amplitude amplification to obtain the speedup.

The structure of \textsc{PromiseBall} yields a ternary tree of depth $r$, induced by the up to three choices of  literals in the recursive step of the algorithm. Thus, a sequence of choices $s_1,\ldots, s_{r} \in \{1,2,3\}$ specifies a leaf in the tree, and hence the subset of literals whose values have been flipped. In other words, the algorithm \textsc{PromiseBall} induces a mapping from $s_1,\ldots, s_{r}$ to a set of at most $r$ variables to be flipped, denoted $V=\{v_1,\ldots,v_{r'}\}$, where $r'\leq r$. 
The non-recursive algorithm, as a subroutine, computes the list of variables $V$ indexed by the sequence $\vec{s}=s_1,\ldots, s_{r}$, generates the candidate assignments $\mathbf{x}_{V}$ realized by flipping the values of the variables specified by $V$, and checks if they satisfy the formula. This subroutine can be executed in  polynomial time. The overall algorithm simply goes through all $3^r$ sequences  $\vec s$, yielding the runtime $O^{\ast}(3^r)$. 

This can easily be turned into a quantum algorithm $\textsc{QBall}_{12}$, which realizes the mapping:
\EQ{
\ket{s_1,\mydots, s_{r}} 
\ket 0\ket{0}\stackrel{\textsc{QBall}_{12}}{\longrightarrow}
\lefteqn{\underbrace{\phantom{\ket{s_1,\mydots, s_{r}} \ket{V}}}_{\textsc{QBall}_1}}\ket{s_1,\mydots, s_{r}}\overbrace{\ket{V}\ket{F(\mathbf{x}_{V})}}^{\textsc{QBall}_2}, \nonumber \\ \nonumber
}
\noindent
where the first part of the algorithm ($\textsc{QBall}_1$) generates the indices of the variables in $V$, and the second part ($\textsc{QBall}_2$) verifies whether the formula $F$ is satisfied by the assignment $\mathbf{x}_{V}$. The full quantum algorithm for PBS, which we call $\textsc{QBall}$, then uses amplitude amplification to find one sequence $\vec s$ which yields a satisfying assignment, using $O(3^{r/2})$ calls of the $\textsc{QBall}_{1,2}$ subroutines, each with polynomial runtime.

What remains to be seen is how to implement the algorithm space-efficiently. We give the basic ideas here, and full details in the supplemental material. For ease of presentation, we first show how $\textsc{QBall}_2$ can be realized straightforwardly, using many ancillas, and then show how to reduce their number. Such an implementation of $\textsc{QBall}_2$ would simply utilize $n$ additional qubits (one for each variable) and initially assign them the value $\mathbf{x}.$ Then, the circuit would iterate through the registers specifying $V$ (na\"{i}vely requiring $O(r\log n)$ qubits), and introduce (controlled) negations (Pauli $\sigma^x$ gates) to those ancillary qubits selected by the values in $V$. This would finalize the \textit{variable presetting stage}, and set the input to the formula to $\mathbf{x}_{V}$. From there, we would sequentially evaluate each clause, by associating a gate controlled by the variable  qubits corresponding to the variables occurring in the clause. This controlled gate applies the appropriate negations to realize the literals, increasing a counter if a clause is satisfied. After this is iterated for each clause,
 the circuit flips the value of the output qubit only if the counter equals $L$, meaning all the clauses are satisfied. 
 Such a circuit would use $O(\log(L)+n)$ ancillas. 
Since $L =O( \poly(n))$, the key problem is the dependence on $n$. The situation is simplified by noting that although the circuits our algorithms generate depend on $F, \mathbf{x}$ and $r$, we can w.l.o.g. assume that $\mathbf{x} = (0,\ldots, 0)$, and subsume the appropriate negations directly into $F$ \footnote{ 
Let $F_{\mathbf x}$ be the formula obtained by 
negating any literal in $F$ which corresponds to a variable to which $\mathbf x$ assigns the value $1$, so that $F_{\mathbf x}(0, \ldots, 0) = F(\mathbf{x})$. By \emph{subsuming} $\mathbf{x}$ into $F$, we  mean that we use $F_{\mathbf x}$ instead of $F$}. Next, observe that since the clauses are evaluated sequentially, we actually only require three variable-specific ancillas, specifying the values of the variables appearing in the current clause:  for each clause, each of the three ancillas correspond to the three variables occurring in the given clause. 
The variable presetting stage is now done individually for each clause $C_j$: before clause evaluation, the circuit iterates through the $V-$specifying register, and flips the $k^{th}$ ancilla if the specification matches the  $k^{th}$ variable within the clause $C_j$. The three ancillas are uncomputed after $C_j$ is evaluated, and can be reused. This requires at most $O(\log n)$ additional qubits. 

The algorithm $\textsc{QBall}_1$ is more involved, but again one of the main savings utilizes the observation that few ancillas, which can be reused, suffice to determine whether a given clause is satisfied. The rough description of our algorithm is as follows. $\textsc{QBall}_1$ comprises a main loop which sequentially adds one variable specification $v_i$ to the already specified set as follows: the $i$\textsuperscript{th} circuit block takes the specifications of the first $i-1$ variables $v_1, \ldots, v_{i-1}$ as inputs, iterates through all the clauses, and evaluates each clause $C_j$ (in a manner similar to  $\textsc{QBall}_2$), using the values $v_1, \ldots, v_{i-1}$ to correctly preset the clause-specific input. If the clause is not satisfied, it uses the value of $s_i$ to select the specification of one variable occurring in $C_j$, taking into account that variables which have already been flipped cannot be selected again, and storing this specification as $v_i$. 
To ensure reversibility, additional counters have to be used, but these can be uncomputed and recycled (see supplemental material for details).
The final compression relies on a more efficient encoding of $V,$ which, as an ordered list  would use $O(r\log n)$ qubits. Since the ordering does not matter, instead of storing the positional values, we can store the relative shifts $v_1, v_2-v_1,\ldots, v_{r'}-v_{r'-1}>0$ of a \emph{sorted version} of the list,  
using no more digits than necessary and a separation symbol to denote the next number.  
Since these values add up to at most $n$, it is see that  $O(r\log (n/r) + r)$ qubits suffice for this encoding. 
This structure indeed encodes a set, since the initial ordering is erased. 
However, straightforward algorithms that use such encodings of sets instead of ordered lists encounter problems with reversibility. To illustrate this, note that in the process of adding a new variable to $V$, one must realize the two steps $\ket{\{v_1, \ldots, v_{i-1} \} }  \ket{0} \mapsto  \ket{\{v_1, \ldots, v_{i-1} \} }  \ket{v_i} $, \textit{i.e.}, finding the new element, and 
 $\ket{\{v_1, \ldots, v_{i-1} \} }  \ket{v_i} \mapsto  \ket{\{v_1, \ldots, v_{i} \} }  \ket{0} $, 
 \textit{i.e.}, placing it into the set, and, critically, freeing the ancillary qubit for the next step. However, this is irreversible, since the information which element was added last is lost. The full ordering information requires additional $O(r \log (r))$ additional qubits, which would nullify all advantage. 
 Of course, one could instead realise the reversible operation $\ket{\{v_1, \ldots, v_{i-1} \} }  \ket{v_i}\ket 0 \mapsto \ket{\{v_1, \ldots, v_{i-1} \} } \ket{v_i}  \ket{\{v_1, \ldots, v_{i} \} }  $, followed by applying the inverse of the entire circuit up to this point to uncompute $\ket{\{v_1, \ldots, v_{i-1} \} }  \ket{v_i}$, but this would result in an exponential instead of polynomial runtime of $\textsc{QBall}_1$.  
 We circumvent this issue by splitting $V$ into $O(\log r)$ sets of sizes $1,2,4,\ldots$, and the loading of each larger block is followed by an algorithmic deletion of all smaller blocks. 
This ensures that the overall number of qubits needed for this encoding is still just $O(r\log(n/r) + r),$ at the cost of $2^{O(\log(r))}$ additional steps, which is still only polynomial. These structures and primitives lead to an overall space- and time-efficient implementation of $\textsc{QBall}_1$ (see the supplemental material for details).
Combining these subroutines with quantum search over the vector $\vec{s}$, we obtain the algorithm \textsc{QBall}, solving PBS in time $O^\ast(3^{r/2})$ and using $O(r\log (n/r) + r + \log n )$ qubits.

\textit{Hybrid algorithm for 3SAT:} The runtime of \textsc{QBall} not only quadratically beats that of  \textsc{PromiseBall}, but also outperforms the randomized algorithm of Sch\"{o}ning ($2^{r}$ vs. $2^{ \log_2(3) r /2  } \approx 2^{0.79 r}$). 
Since our quantum devices are size-limited, they can only solve PBS for sufficiently small $r$. In principle, we could use the standard hybrid approach for $\textsc{PromiseBall}$, \emph{i.e}, call \textsc{QBall} instead of \textsc{PromiseBall} deep in the recursion tree when $r$ is sufficiently small. But, as mentioned earlier, this still leads to a threshold problem. 
This is resolved by considering an improved classical deterministic algorithm for PBS \cite{2011_Moser}, where coding theory is applied once more to cover the space of the choice vectors $\vec{s}$ with covering balls. This yields an algorithm with runtime $O^\ast((2+\epsilon)^r)$, where $\epsilon$ can be chosen arbitrarily small -- in other words, the runtime of this algorithm for PBS essentially matches the runtime of Sch\"{o}ning. While the details of this algorithm are not important here, the critical point is that, like the original \textsc{PromiseBall} algorithm, it recursively calls itself to solve PBS with ever smaller values of $r$ (each time reduced by a quantity which depends on $\epsilon$). Thus, the standard hybrid approach can be applied. Since $\textsc{QBall}$ beats the runtime of 
this improved classical algorithm for PBS, 
the hybrid algorithm is faster than Sch\"oning's algorithm, and, unlike using \textsc{PromiseBall}, there is no threshold induced by sub-optimal classical routines. 

To estimate the runtime of our hybrid algorithm, note that since the quantum algorithm only requires $O(r\log (n/r) + r + \log n)$ qubits, a device with $M=cn$ qubits can solve PBS for $r=f(c)n$ for a function $f$ (see supplemental material for details). Since the hybrid algorithm replaces a classical subroutine of runtime $O^\ast(2^r)$ with a quantum subroutine of runtime $O^\ast(3^{r/2})$ in a recursion tree below depth $r=f(c)n$, the runtime of the hybrid algorithm beats that of the classical one by a factor of $O^\ast( (\sqrt3/2)^{f(c)n})$. Thus, the combined runtime of the hybrid algorithm is $O^\ast(2^{(\gamma_{0}+\varepsilon - 0.21 f(c))n} )$, as claimed.

\textit{Conclusions and Outlook:}
We have shown that a small quantum computer can speed up the solving of relevant computational problems of significantly larger size. While obvious for structureless problems (\emph{e.g.}, unstructured search), when considering algorithms which use the problem's structure, such as in the case of Sch\"oning's algorithm, speedups are non-trivial: the way the problem is partitioned must maintain the structure which is exploited by the classical algorithm to avoid thresholds. 

In terms of the broader underlying question, our work complements \cite{2016_Bravyi}, where classical-assisted quantum algorithms are considered. Those techniques are applicable in the opposite regime when the quantum computer is almost the size of the problem, and the computation is of comparatively small depth.

Our algorithm achieves a significant speedup, as given by a reduction of the rate $\gamma$, which is the relevant performance parameter,  
characterizing runtimes of the form $O^\ast(2^{\gamma n})$.
The speedup we provide here holds relative to the algorithm of Sch\"oning. Our results however generalize to certain other algorithms which are based on Sch\"oning's algorithm (\emph{e.g.}, \cite{2002_Hofmeister,2010_Iwama}
), since those are mainly achieved by using better initial assignments, and to the variants tackling $k$SAT ($k>3$). 
Historically, the best solvers with provable bounds are based either on the ideas of Sch\"{o}ning, or alternatively, the approach of \cite{2005_Paturi}, which includes, to our knowledge, the current classical record holder \cite{2014_Hertli}. 
It would be interesting to see whether this second class of SAT algorithms is also amenable to the types of enhancements we achieve here.

Apart from fundamental interest, the question of this work is becoming increasingly more relevant given the current progress in prototypes of small quantum computers \cite{2017_IBM,2018_Google,2018_Intel}. 
We assume an error-free ideal scenario, while in practice this may be one of the bottlenecks to exploit small devices \cite{2018_Preskill}. Thus, it would be particularly interesting to see how the presented algorithm behaves given noise, and develop methods to decrease the number of gates, thereby increasing tolerance.

\textit{Acknowledgements.--}
VD acknowledges the support from the Alexander von Humboldt Foundation. JIC acknowledges the ERC Advanced Grant
QENOCOBA under the EU Horizon 2020 program (grant
agreement 742102).

%


\onecolumngrid

\newpage
\pagebreak
\newpage
\pagebreak

\section*{Supplemental Material}
\label{SM}

In the supplemental materials we provide the details referred to in the main text. In Section~\ref{sm:qPBS}, we provide a high level description of the algorithm, along with the key ideas. The important aspects of the algorithm are rather straightforward, however, for completeness we provide a fully detailed exposition of the central quantum algorithm \textsc{QBall}. 
In Section~\ref{sm:runtimehybrid}, we provide the details of the full hybrid algorithm and a detailed runtime analysis.

\subsection{Detailed quantum algorithm for PBS}
\label{sm:qPBS}
In this section, we describe the quantum algorithm for PBS. Let $r\geq 0$ be an integer and $F$ a 3-SAT formula over $n$ variables. The problem is to decide whether there exists a satisfying assignment of $F$ with hamming weight at most $r$.

As explained in the main text, the most involved part of the algorithm is a subroutine called $\textsc{QBall}_1$ which takes as input  $s_1,\ldots,s_r\in\{1,2,3\}$ and outputs a set $V=V(\vec s)$ of at most $r$ variables with the property that  $F$ has a satisfying assignment with hamming distance at most $r$ if and only if there exists an $\vec s\in\{1,2,3\}^r$ such that the assignment $x(V(\vec s))$, obtained by setting all variables in $V(\vec s)$ to $1$ and all other ones to $0$, is a satisfying assignment.

This algorithm is a quantum version of the non-recursive variant of the \textsc{PromiseBall} algorithm described in the main text, which is quantized using amplitude amplification. We will therefore first develop a classical \emph{reversible} circuit (which depends on $F$ and $r$), using at most $O(r\log(n/r) + r + \log n)$ ancilla bits that maps $\vec s$ to a space-efficient encoding of the set $V(\vec s)$. Although the algorithm below is, for now, purely classical (and reversible), we will nevertheless use bra/ket notation for simplicity.

\subsubsection{Key ideas of the quantum algorithm for the Promise Ball problem}
Before going into the particulars of the algorithm, it is worth-while to highlight the key ideas we utilize, and explain why we go into such detail to explain a relatively simple algorithm.
As mentioned, given the radius $r$ and ball center $\mathbf{x}$ (this can be shifted to all-zero by modifying the formula), our algorithm can be summarized as follows:

\begin{figure}[H]
\begin{algorithmic}[1]
\State Set $V = \emptyset$.
\State For $i=1$ to $r$ do begin
\State \indent Set $\mathbf{x}_V$ to be the assignment where the variables in $V$ gave been flipped.
\State \indent Find first clause in $F$ which is not satisfied by $\mathbf{x}$' 
\State \indent If such a clause exists, use $s_i$ to select the next variable $v_i$. Else set $v_i$ to  a dummy variable (index beyond $n$). 
 \State \indent Update $V\mathop{:} = V \cup \{v_i\}$.
\State end.
\end{algorithmic}
\end{figure}

The output $V$ of the above process collects specifications of $r$ variables, and the variables with indices below or equal to $n$ specify the satisfying assignment, if one exists. We use dummy variables as this avoids the need for controlled operations which would change the algorithm behaviour (\emph{e.g.} terminate the loop once no suitable clause is found), which end up requiring more memory to be implemented reversibly.
The majority of subtleties in our algorithms pertain precisely to this: utilizing as few systems as possible, while maintaining reversibility.

As mentioned in the main text, one of the key issues is dealing with the size of the representation of $V$. If $V$ is represented as an ordered list, we end up using $r \log(n)$ qubits, which is too much for our purposes.
On the other hand, $V$ can be implemented as a set, forgetting the order, in which case we require only $r\log(n/r)$ qubits (we only need to store the shifts between the indices of the variables once they are ordered). This is sufficient for our purposes, however, now the process of adding one variable to the set, which occurs at each step $1\ldots,r$, is no longer reversible -- more specifically, a reversible operation produces registers which still, implicitly, contain information about the positions. 
Since all the processes we utilize will be used in an overarching amplitude amplification process, we must use only reversible operations -- no measurements with classical feedback can be used.
If these registers containing the implicit position information are not re-used, we end up using too much memory. On the other hand, to re-use those registers, since measurement is not an option, we need to un-compute information, which may be computationally expensive.

To exemplify the problem, it is possible to implement the operation
\EQ{
\ket{Set\ V_{i-1}} \ket{v_i} \ket{0} \mapsto \ket{Set\ V_{i-1}} \ket{v_i} \ket{Set\ V_{i-1} \cup \{ v_i\} }, }
followed by the deletion of the first register in the state $ \ket{Set V_{i-1}}$, by reversing whatever circuit was used to generate it. 
This would enable us to realize the transform 
\EQ{
\ket{Set\ V_{i-1}} \ket{v_i} \ket{Set\ V_{i-1} \cup \{ v_i\} }  \mapsto \ket{Set\ V_{i}} \ket{0} \ket{0}
}
where $V_i = V_{i-1} \cup \{ v_i\}$.  This constitutes an algorithmic deletion procedure, where we have deleted all unnecessary information.
However, it is easy to see that the process calling the inverse of the entire circuit up to step $i-1$, which is (recursively) done for each $i$, leads to a run-time which will be exponential in $i$. Since $i$ ranges to $r$, we end up with an algorithm with an exponential number of steps in $r$. 
In the interesting cases when $r$ is a fraction of $n$, this becomes (significantly) more expensive than the overall classical algorithm.

In the process detailed below, we circumvent this problem by using a more structured memory which allows for more efficient deletion. 
Specifically, we will split the memory into approximately $\log(r)$ sets, each storing twice as many variable indices, so of sizes $ l = 1,2, 4, \ldots$, in total storing all $r$ variable specifications. This memory structure is still space-efficient, in that it still requires $O(r \log(n/r))$ (qu)bits.
However, such a structure can be used for efficient deletion.
The basic idea can be illustrated by the following algorithm, which is less efficient than our final proposal given later, but easier to analyze.

For each size $l$, we will utilize two memory blocks. The moment the two memory blocks of some size $l'$ are filled with variable indices, they are merged and stored into the larger memory block of size $2 l'$.
These two memory blocks of size $l'$ are now algorithmically deleted by inverting the processes which filled them. 
Again, like in the first algorithmic deletion procedure, the algorithm is specified using recursive calls.  However, in the former case, the depth of the recursion was  $r$. In the more efficient algorithm, the recursion depth is up to the number of differing block sizes. Since we use $O(\log(r))$ different-sized blocks, the overall computational cost is proportional to $O(2^{\log(r)}),$ so  polynomial, and not exponential, in $r$. 

The more efficient algorithms below are based on the same idea, and presented in more detail.
The high-level descriptions of all the subroutines required to realize all the elementary steps to execute the overall algorithm are provided in Table \ref{tab}.
Each sub-routine must satisfy the same two critical properties we just discussed for the highest level of the algorithm description. First, each subroutine must be economic with respect to how many ancillary (qu)bits are used. Second, since each subroutine is used many times, it is also critical that the ancillary qubits are always reset (uncomputed), so they can be reused, and that the deletion is sufficiently efficient. 
To make sure both properties are maintained, we provide all the subroutine description to full detail.

\begin{table}

\begin{tabular}{|l | p{13cm}|}
\hline
Main subroutines: & \\
\hline
$\calculate_i$ & $\ket{s_i}\ket{\EffEnc V_{i-1}} \ket{0} \mapsto \ket{s}\ket{\EffEnc V} \ket{v_{i}}$

\vspace{3pt}\\
& Given a trit $s_i$ and (an efficient encoding of) the current set  of variables,  produces the next variable $v_{i}$  to be flipped.\\
\hline
$\merge_i$ & $\ket{\vec s}\ket{\EffEnc V_{i-1}} \ket{v_{i}} \mapsto \ket{\vec s}\ket{\EffEnc V_i}$
\vspace{3pt}\\
& Adds the next variable to the set while maintaining efficient encoding, and uncomputes the value of the added variable.\\
\hline
Key ancillary subroutines: & \\
\hline
$\extract_k$ & $\ket{\enc S}\ket{j}\ket0 \mapsto \ket{ \enc S}\ket j \ket{y_j}$
\vspace{3pt}\\
& Extracts the $j^{th}$ element (w.r.t. ascending ordering) from the encoding of the set $S$ of known size $k$.  Critical subroutine in $\calculate_i$.\\
\hline
 $\shift_k$ & $\ket{\enc S}\ket j\ket 0 \mapsto \ket{\enc S}\ket j\ket{y_j-y_{j-1}}$\vspace{3pt}\\
 & Used in $\extract_k$. The set-encoding stores shifts between neighbouring \\ &elements, and 
 this subroutine extracts them. In essence, it counts special separation characters delineating different numbers in the set encoding.\\
 \hline
 $ \contains_k$ & $\ket{\enc S}\ket v\ket 0 \mapsto \ket{\enc S}\ket v \ket{v\in S?}$\vspace{3pt}\\
 &Checks if $v$ is present in the set $S$, given encoding of $S$. Required for checking if   a clause is satisfied when flipping the values of literals  if they occur in $S$,  and also when selecting the next variable, as we can only select those which have not been chosen  previously. Uses $\extract_k$.\\
 \hline
$\ch_{j,i}$ & $\ket{\EffEnc V_{i-1} }\ket 0 \mapsto \ket{\EffEnc V_{i-1}}\ket{C_j(V_{i-1})}$\vspace{3pt}\\
 &Checks if the $j^{th}$ clause is satisfied when the variables in the set $ V_{i-1}$ are flipped.  One of the key subroutines.\\
 \hline
  $\select_i$ & $\ket{\EffEnc V_{i-1}}\ket {s_i} \ket{L+1-j}\ket 0 \mapsto \ket{\EffEnc V_{i-1}}\ket{s_i} \ket{L+1- j}\ket{v_{j,i}}$\vspace{3pt}\\
  &  Once a clause $C_j$ is identified to  not  be satisfied, this subroutine uses the choice $s_i$ and  
  the set of already flipped variables to select a variable  in $C_j$  \\ & will be flipped next. \\
  \hline
  $\append_k$ &  $\ket v\ket{\enc S}\ket{0} \mapsto \ket v\ket{\enc S\cup \{v\}}$
   \vspace{3pt}\\
    &  Adds an element to the set $S$ and computes the encoding of the new set. The added element is assumed to be larger than the largest element in $S$. \\
\hline
  $\union_{k_1,k_2}$ & $\ket{\enc S_1}\ket{S_2}\ket{0} \mapsto \ket{\enc S_1}\ket{\enc S_2}\ket{\enc S_1\cup S_2}$
   \vspace{3pt}\\ &  Generates an efficient encoding of the union of two efficiently encoded sets. In essence, uses $\extract$ on both sets to compare elements,  storing the smaller, and keeping a  position counter for both. Uses $\append_k$.\\
    \hline
\end{tabular}
\caption{\label{tab}Summary of subroutines for the space-efficient reversible implementation to solve PBS. The two main subroutines ran in succession find the next variable, add it to the set, and, critically, free the ancillary registers so they can be reused.}\label{tab:subroutines}
\end{table}

\subsubsection{Algorithm overview}

The basic idea for the algorithm to calculate $V(\vec s)$ from $\vec s$  consists of a main loop going through $i=1,\ldots,r$, and attempts to find another variable $v_i$ to be added to $V$ in the following way. The algorithm finds the first clause $C_j$ that is not satisfied by setting all variables in $V$ to $1$ (we assume that there is some predefined order of the clauses $C_1,\ldots,C_L$). We then select the next variable $v_i$ to be the $s_i$\textsuperscript{th} variable in $C_j$ that is not already in $V$. If no such variable exist (\textit{i.e.}, the entire clause $C_j$ is already in $V$) or there exist no unsatisfied clause under the current $V$, then the algorithm adds a dummy variable $x_{n+i}$ to $V$. 
Na\"{i}vely and without any further processing, this would lead to an encoding of $V$ that simply stores the ordered list $v_1,\ldots,v_r$, taking $O(r\log n)$ bits in total. As mentioned in the main text, this would ultimately lead to too much of a qubit requirement of the corresponding quantum algorithm to give a strong speed up. We therefore first describe how to efficiently encode a set of variables.

 Let $S=\{y_1,\ldots,y_k\}$ be a set, where $0<y_1<\cdots < y_k \leq 2n$ are integers. Define $\ket{\enc S}$ to be a sequence of at most $O(k\log(n/k) + k)$ trits encoding $(y_1, y_2-y_1,\ldots, y_k-y_{k-1})$, where each number is encoded on the binary subspace of the trits using no more digits than necessary and the third symbol of the trit subspace serves as a separation character between successive numbers. Note that since the numbers add up to at most $n$, the number of trits required for this encoding is indeed at most
\begin{align}
	\lceil \log_2 y_1\rceil + \lceil \log_2 (y_2-y_1)\rceil+ \cdots+\lceil \log_2 (y_k-y_{k-1})\rceil + 2k &\leq \log_2 y_1+\log_2 (y_2-y_1) + \cdots + \log_2 (y_k-y_{k-1})  + 3k \\
	&\leq k\log_2\frac{y_k}{k} +3k\\
	&\leq k\log_2\frac nk + 4k,
\end{align}
where the second line follows from concavity of the logarithm. Note that this is significantly less than the na\"{i}ve encoding of $S$ which uses $O(k\log n)$ bits. For concreteness, we will assume that after the last separation character, the remaining trits are in $0$.

This gives us an efficient way of storing a set of at most $r$ elements. However, encoding $V$ as $\ket{\enc V}$ at all stages of the algorithm is problematic, because reversibility would be lost (indeed, the map $\ket{\enc V_{i-1}}\mapsto \ket{\enc V_i}$ is not injective). We therefore store the set of variables during the algorithm as follows. Let $V_i=\{v_1,\ldots,v_i\}$ be the set of variables obtained after the $i$\textsuperscript{th} round. Note that $|V_i|=i$. We now divide $V_i$ into subsets $V_{i,1},\ldots, V_{i,m_i}$ of sizes which are powers of two corresponding to the binary expansion of $i$. For example, if $i=13$ (binary expansion $1101$), we have $V_{i,1} = \{v_1,v_2,v_3,v_4,v_5,v_6,v_7,v_8\}$, $V_{i,2} = \{v_9,v_{10},v_{11},v_{12}\}$ and $V_{i,3} = \{v_{13}\}$. 

To efficiently store $V_i$, we efficiently encode the set as $\ket{\EffEnc V_i}:=\ket{\enc V_{i,1}}\ldots\ket{\enc V_{i,m_i}}$. Note that the number of bits required for $\ket{\EffEnc V_i}$ is at most
\begin{align}
	\sum_{l=0}^{\lceil \log_2 r\rceil } \left( 2^l\log_2\frac n{2^l} + 2^{l+2}\right) &= 
	\sum_{l=0}^{\lceil \log_2 r\rceil } 2^l\log_2\frac nr
	+ 
	\sum_{l=0}^{\lceil \log_2 r\rceil } 2^{l+2}
	+
	\sum_{l=0}^{\lceil \log_2 r\rceil } 2^l \log_2\frac r{2^l}
	\\
	&\leq 4r\log_2\frac nr + 16r + 
	2r\sum_{l=0}^{\lceil \log_2 r\rceil} \frac{1}{2^{\lceil \log_2 r\rceil -l}} (\lceil \log_2 r \rceil -l)
	\label{eq:numberQubitsInt}\\
	&\leq 4r\log_2\frac nr + 20r \label{eq:numberQubitsInt2} \\
	&= O(r\log(n/r) + r), 
\end{align}
where in \eqref{eq:numberQubitsInt}, the first two terms follow from $
	\sum_{l=0}^{\lceil \log_2 r\rceil } 2^l \leq 4r
$ and the last term follows from $2^{\lceil \log_2 r\rceil} \geq r > 2^{\lceil \log_2 r\rceil-1}$, and \eqref{eq:numberQubitsInt2} follows from
\begin{equation}
	\sum_{l=0}^{\lceil \log_2 r\rceil -l } \frac{1}{2^{\lceil \log_2 r\rceil}} (\lceil \log_2 r \rceil -l) < \sum_{j=0}^\infty \frac j{2^j} = 2.
\end{equation}

For a given $i\in\{1,\ldots, r\}$, the circuit is divided into a calculate phase and a merge phase. The calculate phase performs 
\begin{equation}
	\calculate_i\ket{s_i}\ket{\EffEnc V_{i-1}}\ket 0 
	=\ket{s_i} \ket{\EffEnc V_{i-1}}\ket{v_i},
\end{equation}
\textit{i.e.}, finds the next variable to be added to $V$. 
The merge phase performs
\begin{equation}
	\merge_i\ket{\vec s}\ket{\EffEnc V_{i-1}}\ket{v_i}
	=
	\ket{\vec s}\ket{\EffEnc V_i},
\end{equation}
\textit{i.e.}, converts the efficient encoding of the old set into the efficient encoding of the new set. It achieves ths by using a series of set merge operations comprising calculating the representation of the set-union and the uncomputation of original the set representations.

While the main ideas of both phases are straightforward, the primary challenge is to develop these algorithms in the form of circuits that are both reversible and at the same time require few ancillas. For the latter, it is imperative that at the end of any subroutine, all ancillas, which can be assumed to be initially in the $0$ state, are returned to $0$ so that they can be reused. Note that not achieving this would increase the ancilla requirement for each call of any such subroutine and would likely result in the algorithm requiring too many ancillas. For convenience, we therefore introduce the following notion.

\begin{definition}
	Let $f:\{0,1\}^q \rightarrow \{0,1\}^q$ be a bijection. We say that $f$ can be \emph{implemented reversibly using $l$ ancillas and $g$ gates} if there exists a (classical) reversible  circuit of at most $g$ elementary gates which implements the map
	\begin{equation}
		\ket{x}\ket{0}^{\otimes l} \mapsto \ket{f(x)}\ket{0}^{\otimes l}.
	\end{equation}
\end{definition}

\begin{proposition}\label{prop:phases}\
	\begin{enumerate}[(i)]
	\item \label{prop:phasespart1} $\calculate_i$ can be implemented reversibly using $O(\log n)$ ancillas and $O(\poly(n))$ gates.
	\item \label{prop:phasespart2} $\merge_i$ can be implemented reversibly using $O(r\log(n/r)+r+\log n)$ ancillas and $O(\poly(n))$ gates.
	\end{enumerate}
\end{proposition}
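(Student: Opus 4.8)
The plan is to build both $\calculate_i$ and $\merge_i$ out of the elementary subroutines catalogued in Table~\ref{tab:subroutines}, verifying at each stage that (a) only the claimed number of ancillas is used and (b) every ancilla is returned to $0$ before the subroutine terminates, so that ancillas are genuinely reusable. Since all maps involved are bijections on the relevant registers, it suffices to exhibit reversible circuits of polynomial size; reversibility of the top-level maps will follow from reversibility of the building blocks composed in the standard ``compute--copy--uncompute'' pattern.

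\emph{Part (i), $\calculate_i$.} First I would establish the ancillary primitives bottom-up. The subroutine $\shift_k$ scans the $\enc$-encoding of a set, counting separation characters until it reaches the $j$\textsuperscript{th} number and reading off its binary block; this needs only $O(\log n)$ counter/accumulator qubits, all uncomputed by running the scan backwards. Then $\extract_k$ is obtained by summing the first $j$ shifts (again $O(\log n)$ ancillas, uncomputed), and $\contains_k$ follows by extracting elements one at a time and comparing, reusing the same $O(\log n)$ scratch space since each comparison is uncomputed before the next. Next, $\ch_{j,i}$ evaluates clause $C_j$ under the assignment $\mathbf x_{V_{i-1}}$: for each of the (at most three) variables in $C_j$, call $\contains$ against each of the $O(\log r)$ blocks of $\EffEnc V_{i-1}$ to decide whether that variable has been flipped, combine the three literal values, and uncompute the $\contains$ queries; this is $O(\log n)$ ancillas and $O(\poly(n))$ gates. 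With $\ch_{j,i}$ in hand, $\select_i$ finds the first unsatisfied clause by a sequential scan (maintaining a ``found yet?'' flag and a clause counter, both $O(\log n)$ bits, uncomputed past the first hit via the trick of keeping the counter of \emph{remaining} clauses $L+1-j$), and within that clause uses $s_i$ together with repeated $\contains$ queries to pick the $s_i$\textsuperscript{th} not-yet-flipped variable, defaulting to the dummy index $n+i$ when the clause is already contained in $V_{i-1}$ or no unsatisfied clause exists. Finally $\calculate_i$ is just $\select_i$ with its auxiliary clause-counter uncomputed, leaving $\ket{s_i}\ket{\EffEnc V_{i-1}}\ket{v_i}$. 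Throughout, the $\EffEnc V_{i-1}$ register is only read, never modified, so no ancillas beyond the $O(\log n)$ scratch are consumed, and the gate count is a polynomial times the number of clauses, hence $O(\poly(n))$.

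\emph{Part (ii), $\merge_i$.} Here the task is to absorb $v_i$ into the block structure of $\EffEnc V$ while freeing the $\ket{v_i}$ register. Adding $v_i$ creates a singleton block $V_{i,?}=\{v_i\}$; then, following the carry pattern of binary incrementation, whenever two equal-sized blocks $V_{i,l}, V_{i,l'}$ both of size $2^t$ are present, they must be combined into one block of size $2^{t+1}$. The combine step uses $\append_k$ (insert an element known to exceed all current elements at the tail of an $\enc$-encoding: recompute the final shift, again $O(\log n)$ scratch) as the inner loop of $\union_{k_1,k_2}$, which merges two sorted $\enc$-encoded sets by walking position counters through both via $\extract$ and appending the smaller current element. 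The crucial point for reversibility and ancilla economy is the one flagged in the main text: after computing $\ket{\enc(S_1\cup S_2)}$ from $\ket{\enc S_1}\ket{\enc S_2}$, we uncompute the two source blocks by running their construction circuits backwards --- but because the blocks have geometrically increasing sizes, the recursion depth of these nested uncomputations is $O(\log r)$ rather than $\Theta(r)$, so the total gate count is $2^{O(\log r)}\cdot \poly(n)=\poly(n)$ rather than exponential. The ancilla budget is the dominant term: we must simultaneously hold $\EffEnc V$ (at most $O(r\log(n/r)+r)$ qubits by the counting bound \eqref{eq:numberQubitsInt2} already established), the incoming $\ket{v_i}$ register ($O(\log n)$), and one scratch copy of the largest block being formed during a $\union$ call (another $O(r\log(n/r)+r)$ in the worst case, when the top block of size $\Theta(r)$ is being rebuilt). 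Summing gives $O(r\log(n/r)+r+\log n)$ ancillas, matching the claim.

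\emph{Main obstacle.} The delicate part is not any single primitive but the bookkeeping in $\merge_i$: one must argue that the carry-propagation uncomputations can be arranged so that only \emph{one} auxiliary block-sized scratch register is ever live at a time (otherwise the $O(\log r)$ simultaneously-active block rebuilds would each cost block-space and blow the ancilla count past the target), while still keeping the recursion depth at $O(\log r)$ so the runtime stays polynomial. I would handle this by processing carries strictly from smallest block upward, fully completing (and then uncomputing the scratch of) each combine before the next carry is triggered, and by a short induction on the number of carry steps showing the invariant ``at most one block-sized scratch register in use, all $O(\log n)$ small scratch registers clean'' is preserved. Once that invariant is in place, the ancilla bound and the $\poly(n)$ gate bound both follow by summing the per-step costs, and reversibility is immediate from the compute/uncompute structure.
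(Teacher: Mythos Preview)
Your proposal is correct and follows essentially the same route as the paper: the same bottom-up chain $\shift\to\extract\to\contains\to\ch\to\select$ for part~(i), and the same binary-carry merge with $\append/\union$ followed by uncomputation of the source blocks for part~(ii), with the $2^{O(\log r)}$ runtime coming from the recursion depth.

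Two small points to tighten. First, in your $\union$ sketch, the walk-and-append loop is not reversible as stated: after appending the smaller element and incrementing one of the two position counters you can no longer tell which counter was advanced, so you cannot uncompute step by step. The paper fixes this by storing $K=k_1+k_2$ comparison trits recording which side won each round, then copying the finished output and undoing the whole loop; this costs an extra $O(K)$ ancillas but stays within the bound. Second, the invariant you propose for the $\merge_i$ ancilla count, ``at most one block-sized scratch register live at a time,'' is too strong to hold during the nested uncomputations: when you uncompute the two size-$2^l$ blocks with the freshly built size-$2^{l+1}$ block already present, the inverse circuit transiently re-materialises blocks at all smaller levels. The bound still goes through, but for the reason you already used elsewhere: the simultaneously live blocks have geometrically increasing sizes, so their total encoding length is bounded by the same sum $\sum_l 2^l\log(n/2^l)+O(2^l)=O(r\log(n/r)+r)$ that controls $|\EffEnc V|$. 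Replacing the one-scratch invariant with this geometric-sum observation gives the claimed $O(r\log(n/r)+r+\log n)$.
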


In the next two subsections, we prove Proposition~\ref{prop:phases} by describing each phase in detail. Table~\ref{tab:subroutines} gives an overview of the key subroutines involved. 

\subsubsection{The calculate phase}

In this part of the algorithm for a given $i\in\{1,\ldots,r\}$, we calculate the $s_i$\textsuperscript{th} variable in the first unsatisfied clause.
This part of the algorithm has  two components. First, finding the first clause that is not satisfied, and second, selecting the corresponding variable.

On input, we have $\ket{\EffEnc V_{i-1}}\ket{s_i}$. Note that the sizes of the sets are known in advance. We first show how given a set, we can efficiently and reversibly extract its elements using few bits. 	For a set $S=\{y_1,\ldots,y_k\}$ of integers $0<y_1<\cdots <y_k\leq 2n$ of known size $k$, define the reversible operation $\extract_k\ket{\enc S}\ket{j}\ket0 = \ket{ \enc S}\ket j \ket{y_j}$ for $j\in\{1,\ldots,k\}$. We will assume that any set $S$ we consider has at most $ r\leq n$ elements.

\begin{lemma}\label{lem:extract}
	$\extract_k$ can be implemented reversibly using $O(\log n)$ ancilla bits and $O(\poly(n))$ gates. 
\end{lemma}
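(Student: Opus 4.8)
The plan is to build $\extract_k$ out of the $\shift_k$ primitive listed in Table~\ref{tab:subroutines}, which computes $\ket{\enc S}\ket j\ket 0\mapsto \ket{\enc S}\ket j\ket{y_j-y_{j-1}}$ (with the convention $y_0=0$), and then accumulate the shifts. Recall that $\ket{\enc S}$ stores the consecutive differences $y_1,y_2-y_1,\ldots,y_k-y_{k-1}$ in the binary subspace of a trit string, with the third trit symbol acting as a separator between successive numbers, and trailing trits set to $0$. Extracting the $j$\textsuperscript{th} difference is therefore a matter of scanning the trit string from the left, counting separators until the $j$\textsuperscript{th} block has been identified, and copying the binary digits of that block into an output register; this is a reversible finite-automaton-style pass over the encoding. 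The subtlety is that the scan needs a position counter into $\ket{\enc S}$ (which has $O(r\log(n/r)+r)=O(n)$ trits, so the counter needs $O(\log n)$ bits) and a counter for how many separators have been seen so far (again $O(\log n)$ bits since there are at most $r$ of them). After the target block has been copied out, the counters are run backwards — the scan is its own inverse once the copy direction is fixed — so all these ancillas are returned to $0$. This establishes that $\shift_k$ itself uses $O(\log n)$ ancillas and $O(\poly(n))$ gates.

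First I would make the $\shift_k$ step precise: a forward pass advances the position pointer one trit at a time, and each time it reads the separator symbol it increments the separator counter; while the separator counter equals $j-1$ the pass is inside the $j$\textsuperscript{th} block, so each binary trit read in that window is XORed into the corresponding bit of the output register; once the separator counter reaches $j$ the copying stops. Then the position pointer and separator counter are decremented back down to $0$ by reversing the pass, which leaves $\ket{\enc S}\ket j$ untouched (it was only ever read, never written) and the output holding $y_j-y_{j-1}$. Nothing here writes to more than $O(\log n)$ ancilla bits.

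Next I would assemble $\extract_k$ by iterating $\shift_k$. Maintain an accumulator register (the eventual output, $O(\log n)$ bits since $y_j\le 2n$) and a loop index $t$ running from $1$ to $j$; at step $t$, call $\shift_k$ with second register $\ket t$ to produce $y_t-y_{t-1}$ in a scratch register, add it into the accumulator with a reversible adder, then call $\shift_k^{-1}$ to clear the scratch. The loop index $t$ is itself controlled by comparing against the input $\ket j$, so the loop runs exactly $j\le k\le r$ times; this requires only a comparator and an $O(\log n)$-bit counter, all uncomputed at the end. After the loop the accumulator holds $\sum_{t=1}^j(y_t-y_{t-1})=y_j$, as required, and every ancilla — scratch shift register, loop counter, position/separator counters internal to $\shift_k$ — has been restored to $0$. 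The total ancilla count is a constant number of $O(\log n)$-bit registers, hence $O(\log n)$; the gate count is $j=O(r)=O(\poly(n))$ calls to $\shift_k$, each of $O(\poly(n))$ gates, so $O(\poly(n))$ overall.

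The main obstacle I anticipate is purely bookkeeping rather than conceptual: one must be scrupulous that every scratch register is uncomputed \emph{before} the next iteration rather than merely at the very end, since otherwise the ancilla requirement would scale with $j$ (the exact pitfall flagged in the main text), and one must handle the boundary cases cleanly — $j=1$ (the first block, with no preceding separator), the trailing-zeros convention after the last separator, and the promise $j\le k$ so that the scan never runs past the encoded data. None of these requires a new idea; the $\shift_k$-plus-accumulator structure handles them once the automaton's transition rules are written out, and reversibility is automatic because each elementary step (pointer move, counter increment, controlled XOR, reversible add) is individually reversible and every temporary is mirrored by its inverse.
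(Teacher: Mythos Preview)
Your proposal is correct and matches the paper's approach: implement $\shift_k$ by scanning the trit string with $O(\log n)$-bit counters, then assemble $\extract_k$ by accumulating at most $k$ controlled calls to $\shift_k$ into the output register. The only cosmetic difference is that the paper structures $\shift_k$ as an explicit compute--copy--uncompute (a subroutine $U_{1,k}$ that writes the block's position and length into ancillas, a family of controlled CNOT blocks $U_2$ that copies the relevant trits, then $U_{1,k}^{-1}$ to clear the ancillas), which sidesteps the slight ambiguity in your phrase ``reversing the pass'' --- a literal inverse of your scan would also undo the XOR copies --- but this is precisely the bookkeeping you already flag and is fixed by the same copy-then-uncompute pattern you use in the $\extract_k$ loop.
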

\begin{proof}
	We first show how to implement the operation $\shift_k\ket{\enc S}\ket j\ket 0 = \ket{\enc S}\ket j\ket{y_j-y_{j-1}}$, where by convention $y_0=0$. This operation comprises three parts: first compute the position of the $(j-1)$\textsuperscript{st} separation character in $\ket{\enc S}$ and the number of trits to the $j$\textsuperscript{th} separation character (we call this operation $U_{1,k}$), second copy the relevant trits to a separate register (we call this operation $U_2$), and third uncompute all other ancillas by running $U_{1,k}^{-1}$. 

We introduce three counters, called the block position counter, the length counter and the $j$-counter, going from $0$ to $\lceil k\log_2 (n/k)\rceil + 4k$, $0$ to $\lceil \log_2 n\rceil+1$, and $0 $ to $k$ respectively, and all initially set to $0$. Note that all three counters use no more than $O(\log n)$ bits. To implement $U_{1,k}$, we do the following steps.

First, we add $k$ to the $j$-counter to set it to $k$. Then, for each trit in $\ket{\enc S}$ sequentially, we do the following: Controlled on the $j$-counter being larger than $k-j+1$, increase the block position counter by one; controlled the trit being a termination symbol, decrease $j$-counter by one.
This ends with the $j$-counter being $0$, and the block position counter containing the position of the $j$\textsuperscript{th} block. 

Note, the implementation of operations controlled on the state of counters can be done by a number comparison,
and the latter can be implemented reversibly with one ancilla bit. 
 
Next, 
for each trit in $\ket{\enc S}$, we sequentially do the following: If the $j$-counter is equal to $j-1$, increase the lenght counter by one; if the trit is a termination symbol, increase the $j$-counter register by one. 

Since the length counter is only increased in the $j$\textsuperscript{th} block, it stores its length. The $j$-counter now has the (known) value $k$ and we can reset it to $0$.

This implements $U_{1,k}$, 
which performs
\begin{equation}
	U_{1,k}\ket{\enc S}\ket j \ket 0 \ket 0 =
	\ket{\enc S}\ket j\ket{\pos(S,j)}\ket{\len(S,j)}
\end{equation}
and resets all ancillas to $0$, where $\pos(S,j)$ is the position of the $(j-1)$\textsuperscript{th} separation character and $\len(S,j)=\pos(S,j+1)-\pos(S,j)$ is the number of trits to the $j$\textsuperscript{th} separation character.  

To implement $U_{2}$, it will be convenient to define the family of $U(j,l)$ copy-unitaries. 
Each such unitary is a sequence of $l$ CNOT gates, which copy the trits from position $j+1$ to position $j+l$ as the last $l$ bits of the shift-output register. Each such unitary costs only $l$  CNOTS. 
The  $c-U(j,l)$ is the controlled family of such unitaries, which is controlled on the states of the block position and the length counter, 
and the corresponding $U(j,l)$ is only activated if the block position counter and the length counter are equal to $j$ and $l$, respectively.
For completeness, these are only well-defined if $j + l \leq \lceil k \times log_{2}(n/k) \rceil + 4k$, \textit{i.e.}, the lenght of the register containing $\ket{\enc S}$. 
If the labels are out-of-bounds, we substitute them with identities. 
Although they copy trits, they will only be used as to act on the binary subspace, as we will not be copying the termination symbols. 
$U_{2}$ is then given by 
\begin{equation}
	U_{2} = \prod_{j = 1}^{\lceil log_{2}(n/k) \rceil + 4k } \prod_{ l = 1}^{\lceil log_2 n \rceil+1} c-U(j,l).
\end{equation}
Note that only one $U(j,l)$ will be activated, namely the one specified by the position and length registers.
In total, $U_{2}$ contains $O((\log(n/k) + k) \log(n) ) = O(\log(n)^2)$ gates, and uses no ancillas.
Applying $U_{1,k}^{-1}$ then completes the implementation of $\shift_k$. 

Finally,  $\extract_k$ can be implemented by introducing an additional counter from $1$ to $k$, and $k$ calls to controlled-$\shift_k$, each controlled on the counter being $\leq j$, followed by incrementing that counter. This way, each call adds $y_l-y_{l-1}$ to the output register if $l\leq j$. This finally leaves the output register in $\ket{y_j}$ and all ancillas in $0$, as desired.  
\end{proof}

This extraction subroutine can be used to check if a given variable index $v\in\{1,\ldots,2n\}$ is contained in a set of known size. To this end, define the (reversible) operation $\contains_k$ which performs $\contains_k\ket{\enc S}\ket v\ket 0 = \ket{\enc S}\ket v \ket{v\in S?}$, where the last bit is $1$ if $v\in S$ and $0$ otherwise. 

\begin{lemma}\label{lem:contains}
	$\contains_k$ can be implemented reversibly using $O(\log n)$ ancilla bits and $O(\poly(n))$ gates. 
\end{lemma}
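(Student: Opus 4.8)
The plan is to reduce $\contains_k$ to the already-established $\extract_k$ subroutine from Lemma~\ref{lem:extract}, together with a reversible equality test. Given the input $\ket{\enc S}\ket v\ket 0$, the idea is to loop over $j=1,\ldots,k$, and for each $j$ extract the $j$\textsuperscript{th} element $y_j$ of $S$ into a scratch register via $\extract_k$, compare it to $v$, and flip the output bit if they are equal. Since $S$ has $k$ distinct elements, at most one such comparison succeeds, so the output bit correctly ends up as $\ket{v\in S?}$. Crucially, after performing the comparison we must uncompute the scratch register by running $\extract_k^{-1}$ before moving on to $j+1$, so that only $O(\log n)$ ancillas are ever in use simultaneously.

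Concretely, I would introduce a $j$-counter ranging from $1$ to $k$ (costing $O(\log n)$ bits, initialized to $1$), plus an $O(\log n)$-bit scratch register for the extracted value and one ancilla bit for the equality test. The body of the loop does: apply controlled-$\extract_k$ (controlled on the counter holding the value $j$, or simply use the counter register itself as the index input to $\extract_k$) to obtain $\ket{y_j}$ in the scratch register; apply a reversible comparator that writes $1$ into the equality ancilla iff the scratch register equals $v$; CNOT the equality ancilla onto the output bit; uncompute the equality ancilla by rerunning the comparator; uncompute the scratch register by applying $\extract_k^{-1}$ with the current counter value; finally increment the $j$-counter. After $k$ iterations the counter holds the known value $k+1$ (or $k$), which can be reset to $0$ deterministically, and all ancillas are back to $0$. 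Reversibility is immediate since every step is a composition of reversible gates and every computed value is uncomputed within the same iteration.

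For the gate and ancilla count: each iteration uses one call each to $\extract_k$ and $\extract_k^{-1}$, which by Lemma~\ref{lem:extract} cost $O(\poly(n))$ gates and $O(\log n)$ ancillas; the comparator costs $O(\log n)$ gates and $O(1)$ extra ancillas (number comparison is reversible with one ancilla, as already noted in the proof of Lemma~\ref{lem:extract}); the counter increment costs $O(\log n)$ gates. Summing over $k\leq r\leq n$ iterations gives $O(\poly(n))$ total gates, and since ancillas are reused across iterations the peak ancilla usage is $O(\log n)$, as claimed.

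The main thing to be careful about — rather than a deep obstacle — is the bookkeeping that guarantees \emph{all} ancillas, including those internal to the $\extract_k$ calls and the loop counter, are returned to $\ket 0$ at the end, since $\contains_k$ will itself be invoked many times as a subroutine (e.g.\ inside $\select_i$ and the clause checks). The uncompute-within-each-iteration discipline handles the scratch and equality registers, and the $j$-counter ends at a value known in advance (independent of the input), so it can be cleared unconditionally; Lemma~\ref{lem:extract} already guarantees $\extract_k$ leaves its own ancillas clean. One minor point worth stating explicitly is that the extracted values $y_j$ lie in $\{1,\ldots,2n\}$ so the scratch register and comparator need $\lceil\log_2(2n)\rceil + O(1)$ bits, which is still $O(\log n)$; and if $v$ is a dummy index $> n$ the routine still behaves correctly, simply returning $0$ unless $v$ happens to coincide with a stored dummy.
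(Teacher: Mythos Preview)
Your proof is correct and follows essentially the same approach as the paper's: loop over $j=1,\ldots,k$, use $\extract_k$ to pull out $y_j$, compare it to $v$, and accumulate the result into the output bit, with $O(\log n)$ reusable ancillas. The only cosmetic difference is the cleanup discipline: you uncompute the scratch and equality registers within each iteration so that the loop counter ends at a known value and can be reset unconditionally, whereas the paper instead suppresses further extractions once a match is found (via a counter that is incremented only when the temporary result bit is set), copies the result bit to the output, and then runs the inverse of the entire circuit to reset all ancillas.
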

\begin{proof}
 We introduce a counter from $0$ to $k$ and $O(\log n)$ additional ancilla bits. Then, for each $j=1,\ldots,k$, we use Lemma~\ref{lem:extract} to extract the $j$\textsuperscript{th} element, controlled on the counter being in $0$. We then check whether the extracted element is equal to $v$, and store the result in a separate temporary result bit. We then uncompute the controlled extraction and, controlled on the result bit being in $1$, increment the counter. After doing this for all $j=1,\ldots,k$, the result bit is in $1$ if the element is in $S$ and $0$ if not. We apply a CNOT of the temporary result bit to the output bit and then run the inverse of the entire circuit up to that point to reset the ancillas. 
\end{proof}

Next, for a clause $C_j$, let
\begin{equation}
	C_j(V) =  \begin{cases}
		1 & C_j \text{ satisfied by $x(V)$} \\
		0 &  \text{otherwise}.
	\end{cases}
\end{equation} 
We furthermore define a subroutine $\ch_{j,i}$ 
which performs 
\begin{equation}
\ch_{j,i}\ket{\EffEnc V_{i-1} }\ket 0 = \ket{\EffEnc V_{i-1}}\ket{C_j(V_{i-1})}.
\end{equation}
 Note the dependence of $\ch_{j,i}$ on $i$ reflects that the routine is adjusted slightly for different set sizes appearing in $\ket{\EffEnc V_i}$ (which are known in advance). 

\begin{lemma}\label{lem:check}
	$\ch_{j,i}$ can be implemented reversibly using $O(\log n)$ ancilla bits and $O(\poly(n))$ gates.
\end{lemma}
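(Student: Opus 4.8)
The plan is to implement $\ch_{j,i}$ by reducing the question ``is clause $C_j$ satisfied by $x(V_{i-1})$?'' to a constant number of membership queries against the set $V_{i-1}$, each of which is already handled by $\contains_k$ from Lemma~\ref{lem:contains}. Recall that we have subsumed the ball center into the formula, so $x(V_{i-1})$ is the assignment that is $1$ exactly on the (at most three) variables of $C_j$ that lie in $V_{i-1}$, and $0$ elsewhere; a literal $l$ in $C_j$ over variable $x_k$ evaluates to $1$ precisely when either ($l=x_k$ and $k\in V_{i-1}$) or ($l=\bar x_k$ and $k\notin V_{i-1}$). So the first step is to read off, from the fixed description of $F$ (which the circuit is allowed to depend on), the three variable indices $k_1,k_2,k_3$ of $C_j$ together with the three polarity bits indicating whether each literal is negated; these are hard-wired constants, not quantum data.

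Next I would allocate three temporary result bits $b_1,b_2,b_3$ and $O(\log n)$ scratch ancillas. For each literal $t\in\{1,2,3\}$: call $\contains_k$ (applied, in turn, to each of the $m_{i-1}$ sub-blocks $\ket{\enc V_{i-1,1}},\ldots,\ket{\enc V_{i-1,m_{i-1}}}$ whose sizes are known in advance — $V_{i-1}$ is a disjoint union of these blocks, so $k_t\in V_{i-1}$ iff it lies in one of them) to compute the membership bit of $k_t$ into a scratch bit, XOR in the negation-polarity constant of literal $t$ so that the scratch bit now holds the truth value of the literal, CNOT that value into $b_t$, and then uncompute the membership bit by rerunning $\contains_k^{-1}$ across the blocks. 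After the loop, $b_1\vee b_2\vee b_3$ equals $C_j(V_{i-1})$; compute this disjunction into a fresh bit with one or two Toffoli/De Morgan gates, CNOT it into the designated output register, and then run the inverse of the whole circuit built so far (the literal-evaluation passes and the disjunction) to restore $b_1,b_2,b_3$ and all scratch ancillas to $0$. This is the standard ``compute--copy--uncompute'' pattern, and it guarantees the ancilla-cleanliness property demanded by the definition.

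For the resource bounds: $\contains_k$ uses $O(\log n)$ ancillas and $O(\poly(n))$ gates by Lemma~\ref{lem:contains}; we invoke it $O(m_{i-1})=O(\log r)=O(\log n)$ times per literal and there are three literals, the disjunction costs $O(1)$ gates, and the uncomputation only doubles the count — so $\ch_{j,i}$ runs in $O(\poly(n))$ gates. The ancillas are $O(\log n)$ scratch bits plus the $O(1)$ result bits plus whatever $\contains_k$ needs, i.e.\ $O(\log n)$ in total; crucially the big register $\ket{\EffEnc V_{i-1}}$ is only read, never altered, so it is not counted as an ancilla here. Note the dependence on $i$ is entirely benign: it only changes which sizes $k$ appear among the blocks of $V_{i-1}$, all known at circuit-construction time.

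The main obstacle — really the only subtlety — is bookkeeping the block decomposition of $V_{i-1}$ correctly: $\contains_k$ as stated checks membership in a single $\enc$-encoded set of \emph{known} size $k$, whereas $V_{i-1}$ is stored as the concatenation $\ket{\EffEnc V_{i-1}}=\ket{\enc V_{i-1,1}}\cdots\ket{\enc V_{i-1,m_{i-1}}}$ of blocks of sizes dictated by the binary expansion of $i-1$. One must therefore run $\contains$ once per block with the matching size parameter and OR the outcomes, and be careful that each such call cleans up after itself before the next, which the compute--copy--uncompute discipline ensures. Everything else is a routine reversible-circuit assembly.
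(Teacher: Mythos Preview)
Your proposal is correct and follows essentially the same approach as the paper: three variable bits, membership checked via $\contains_k$ block-by-block across the pieces of $\ket{\EffEnc V_{i-1}}$ (using disjointness so that the CNOT-accumulated results give the true membership), literal polarities applied as hard-wired flips, clause disjunction computed into the output, and a global compute--copy--uncompute to reset ancillas. Your write-up is more explicit about the block decomposition of $\EffEnc$ and the resource counting, but the underlying construction is identical.
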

\begin{proof}
		We introduce three additional ancilla bits storing the values of the three variables in the clause. Then, for each variable and each set in $\ket{\EffEnc V_{i-1}}$, we use Lemma~\ref{lem:contains} to check if the variable is in the set. The result is copied onto the variable bit using a CNOT, and the $\contains_k$ operation is then reversed. Note that the sets can be assumed to be disjoined. Then, the variable bits are flipped according to the literals in the clause. The clause is then evaluated and the result stored in a separate bit. We then apply the inverse of the circuit to reset the ancillas. 
\end{proof}

Next, we show how to select a variable from a given clause. For a clause $C_j$ with variables $x_a,x_b,x_c$ with $0<a<b<c\leq n$, let $v_{j,i}(V_{i-1},s_i)$ be the  $s_i$\textsuperscript{th} smallest number in $\{a,b,c\}\backslash V_{i-1}$. If $s_i>|\{a,b,c\} \backslash V_{i-1}|$, then $v_{j,i}(V_{i-1},s_i)= n+i$. We now define the reversible operation $\select_i$ which performs 
\begin{equation}
\select_i\ket{\EffEnc V_{i-1}}\ket {s_i} \ket{L+1-j}\ket 0 = \ket{\EffEnc V_{i-1}}\ket{s_i} \ket{L+1- j}\ket{v_{j,i}(V_{i-1},s_i)} 
\end{equation}
 for $j=1,\ldots,L $ and 
\begin{equation} \select_i\ket{\EffEnc V_{i-1}}\ket {s_i} \ket 0\ket 0 = \ket{\EffEnc V_{i-1}}\ket{s_i}\ket 0 \ket{n+i}.
\end{equation} 
\begin{lemma}
	$\select_i$ can be  implemented reversibly using $O(\log n)$ ancilla bits and $O(\poly(n))$ gates. 
\end{lemma}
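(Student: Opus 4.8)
The plan is to reduce $\select_i$ to a constant-size classical table lookup per clause, using $\contains_k$ from Lemma~\ref{lem:contains} as the only nontrivial building block. First I would observe that the register $\ket{L+1-j}$ merely names which clause is relevant, and that the list of clauses $C_1,\ldots,C_L$ is fixed and hard-wired into the circuit. So I would loop over $k=1,\ldots,L$: for each $k$, which corresponds to the clause $C_{L+1-k}$ with variables $x_a,x_b,x_c$ for known constants $0<a<b<c\leq n$, I would reversibly compute into one fresh ancilla the predicate ``$\ket{L+1-j}=k$'' (a number comparison, costing one ancilla as noted in the proof of Lemma~\ref{lem:extract}), and use this bit to control a small sub-circuit; afterwards the predicate bit is uncomputed. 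A final step, controlled on $\ket{L+1-j}=0$, adds the constant $n+i$ to the output register, handling the case where there is no unsatisfied clause.

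The per-clause sub-circuit works as follows. Using Lemma~\ref{lem:contains} on each of the $O(\log r)$ disjoint sets constituting $\ket{\EffEnc V_{i-1}}$, I would compute three ancilla bits recording whether $a$, $b$, and $c$ respectively lie in $V_{i-1}$. Given these three bits together with the trit $s_i$, the target value $v_{j,i}(V_{i-1},s_i)$ is a fixed function of a constant amount of data: it is the $s_i$\textsuperscript{th} smallest element of $\{a,b,c\}$ after deleting the marked ones, or $n+i$ if $s_i$ exceeds the number of survivors. Since $a,b,c,n,i$ are all constants in this branch, I would write $v_{j,i}$ into the output register by a constant number of $\sigma^x$ gates controlled on $s_i$ and the three membership bits (an explicit truth table). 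I would then uncompute the three membership bits by rerunning the $\contains_k$ calls in reverse.

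For the resource counts: $\contains_k$ uses $O(\log n)$ ancillas by Lemma~\ref{lem:contains}, the three membership bits and the comparison bit are $O(1)$, and since everything is uncomputed before moving to the next value of $k$, all of these bits are recycled, so the total ancilla count is $O(\log n)$. Making a sub-circuit controlled on a single bit costs at most a constant factor in both gates and ancillas, and each of the $L=O(\poly(n))$ sub-circuits makes $O(\log r)$ calls to $\contains_k$, each of cost $O(\poly(n))$; hence the overall gate count is $O(\poly(n))$.

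The one point that needs care — though it is routine rather than a real obstacle — is clean reversibility: because exactly one value of $\ket{L+1-j}\in\{0,1,\ldots,L\}$ is realized, all but one of the controlled sub-circuits act as the identity, and in the active branch the three membership bits depend only on $V_{i-1}$ and the (constant) clause while the output depends only on those bits and $s_i$; therefore writing the output first and then reversing the membership computation returns every $O(\log n)$ ancilla to $\ket 0$, as demanded by our notion of reversible implementation. In effect, all the genuine difficulty has already been discharged in the space-efficient implementations of $\shift_k$, $\extract_k$ and $\contains_k$; given those, $\select_i$ is just bookkeeping over the hard-wired clause list.
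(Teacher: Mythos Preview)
Your proposal is correct and essentially identical to the paper's own proof: loop over the $L$ clauses, control on the counter register matching $L+1-j$, use $\contains_k$ across the $O(\log r)$ blocks of $\ket{\EffEnc V_{i-1}}$ to fill three membership bits, write $v_{j,i}$ by a constant-size truth table on $(s_i,\text{membership bits})$, uncompute the membership bits, and handle the $j=0$ case by adding $n+i$ under the appropriate control. The only cosmetic differences are that the paper handles the $0$ case first rather than last, and phrases the per-clause write as ``add $v_{j,i}$'' rather than your equivalent controlled-$\sigma^x$ table.
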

\begin{proof}
	First, we add $n+i$ to the result register controlled on the counter input register being in $0$. Then, for each $j$, we do the following operations controlled on the counter input register being in $L+1-j$: Use Lemma~\ref{lem:contains} to check which variables are contained in $V_{i-1}$ and store the result in three variable ancilla bits (similarly to the implementation $\ch_{j,i}$ in Lemma~\ref{lem:check}). Then, for each combination of values of the variable ancilla bits and $s_i$, add the value of $v_{j,i}$ to the result register controlled on the values of the variable register and $s_i$. We then apply the inverse of the controlled-$\contains_k$ operations to uncompute the variable ancilla bits. 
\end{proof}

To implement $\calculate_i$, we first define $G_{j,i}$ to be the following circuit: On input we have $\ket{\EffEnc V_{i-1}}$, a counter from $0$ to $L$ using $O(\log L)$ bits, a result bit and $O(\log n)$ workspace ancillas. $G_{j,i}$ first performs a controlled-$\ch_{j,i}$, controlled on the counter being in $0$, on the register containing $\ket{\EffEnc V_{i-1}}$ and $O(\log n)$ of the workspace ancillas. Then, we perform a CNOT onto the result bit, controlled on the ancilla containing $C_j(V_{i-1})$ being $0$. We then run controlled-$\ch_{j,i}^{-1}$, controlled on the counter being in $0$, to uncompute the workspace ancillas. Then, we add $1$ to the counter, controlled on the result bit being $1$. 

The implementation of $\calculate_i$ is now as follows.
\begin{enumerate}
	\item \label{step:calculate1} Run $G_{1,i}\ldots G_{L,i}\ket{\EffEnc V_{i-1}}\ket 0\ket 0$. It is easy to see that this results in $\ket{\EffEnc V_{i-1}}\ket{L+1-j_{\min }}\ket{1-F(x(V_{i-1}))}$, where $j_{\min }$ is the smallest $j$ such that $C_j$ is unsatisfied under $x(V_{i-1})$ if such a clause exist, and $L+1$ otherwise, and $F(x(V))$ is $1$ if $x(V)$ is a satisfying assignment for $F$ and $0$ otherwise.
	\item Run $\select_i$ on $\ket{\EffEnc V_{i-1}}\ket {s_i}\ket{L+1 -j_{\min }}$ and $O(\log n)$ workspace ancillas. This produces $\ket{\EffEnc V_{i-1}}\ket {s_i}\ket{L+1 -j_{\min }}\ket{v_i}$, where $v_i$ is the index of the variable to be added to $V$.
	\item Apply the inverse of step~\ref{step:calculate1}. This results in  $\ket{\EffEnc V_{i-1}}\ket {s_i}\ket{v_i}$ and all other ancilla bits being reset to $0$, as desired. 
\end{enumerate}
This proves Proposition~\ref{prop:phases}(\ref{prop:phasespart1}).\qed

\subsubsection{The merge phase}

The main tool of the merge phase is a reversible operation $\union_{k_1,k_2}$ that calculates the union of two sets $S_1,S_2$ of (known) sizes $k_1,k_2\leq r$,  
\begin{equation}
\union_{k_1,k_2} \ket{\enc S_1}\ket{\enc S_2}\ket{0} = \ket{\enc S_1}\ket{\enc S_2}\ket{\enc S_1\cup S_2}.
\end{equation}

\begin{lemma}\label{lem:union}
 $\union_{k_1,k_2}$ can be  implemented reversibly with  $O(K\log(n/K) + K + \log n)$ ancilla bits and $O(\poly(n))$ gates, where $K=k_1+k_2$. 
\end{lemma}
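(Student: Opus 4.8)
The plan is to implement $\union_{k_1,k_2}$ by a single sequential pass that merges the two sorted sets in the style of the classical merge step of mergesort, while being careful to keep all auxiliary registers small and to uncompute them at the end. First I would set up a small number of counters: a position counter $p_1\in\{1,\ldots,k_1\}$ and $p_2\in\{1,\ldots,k_2\}$ pointing into $S_1$ and $S_2$ respectively, a counter $q$ from $0$ to $K=k_1+k_2$ counting how many elements of the union have been written so far, and an $O(\log n)$-bit workspace for holding one extracted element at a time. Since the sizes $k_1,k_2$ are known in advance, all these counters use only $O(\log r)=O(\log n)$ bits. The output register holding $\ket{\enc{S_1\cup S_2}}$ has size $O(K\log(n/K)+K)$ by the encoding bound established just before Lemma~\ref{lem:extract} (same concavity-of-logarithm estimate, now with $k=K$), which accounts for the stated ancilla count.

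The merge loop runs $K$ times. In each iteration I would: use $\extract_{k_1}$ (Lemma~\ref{lem:extract}) to obtain $y_{p_1}\in S_1$ and $\extract_{k_2}$ to obtain $z_{p_2}\in S_2$ into the workspace (using the convention that once a pointer exceeds its set size, the extracted value is treated as $+\infty$, e.g.\ $2n+1$, which is cleanly implementable by a controlled addition); compare the two values with a reversible comparator (one ancilla bit, as noted in the proof of Lemma~\ref{lem:extract}); controlled on the comparison outcome, append the smaller value to the output register using $\append_{q}$ and increment the corresponding pointer $p_1$ or $p_2$ and the counter $q$; then run $\extract_{k_1}^{-1}$ and $\extract_{k_2}^{-1}$ to clear the workspace so it can be reused in the next iteration. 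Because $S_1,S_2$ may overlap, when the two extracted values are \emph{equal} I append once and advance \emph{both} pointers, so that duplicates are not written twice; this is the only place where the logic departs from a plain mergesort merge, and it is handled by one extra branch of the controlled operation. After $K$ iterations the output register holds $\ket{\enc{S_1\cup S_2}}$; the counters $p_1,p_2,q$ now hold the known final values $k_1{+}1$ (or whatever the exhaustion pattern dictates) and can be reset, but to be safe I would instead uncompute them by running the inverse of the whole loop restricted to the counter updates — or, more simply, observe that their terminal values are determined by $|S_1\cup S_2|$, which is itself recorded implicitly and can be recomputed — and zero them. The total gate count is $K$ iterations each costing $\poly(n)$ (dominated by the $\extract$ calls and the $\append$), hence $O(\poly(n))$ overall.

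The main obstacle I anticipate is the clean uncomputation of the pointer counters $p_1,p_2$ and the write-counter $q$: unlike in $\calculate_i$, where the counters end in known values, here the number of elements drawn from $S_1$ versus $S_2$ depends on the data (specifically on $|S_1\cap S_2|$ and the interleaving), so one cannot simply subtract a constant. The standard fix, which I would adopt, is to note that $\append_k$ is invertible (it is listed as a reversible primitive in Table~\ref{tab:subroutines}), so after producing $\ket{\enc{S_1\cup S_2}}$ one can run a second, ``read-only'' pass that walks through the output encoding, re-derives for each element whether it came from $S_1$, $S_2$, or both (by $\contains$-style checks against $S_1$ and $S_2$, which are still available as inputs), and decrements $p_1,p_2,q$ accordingly, returning all three counters to $0$ without disturbing the output. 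This pass again costs $O(\poly(n))$ gates and $O(\log n)$ extra ancillas, so it does not change the stated bounds. With that, all ancillas are restored to $\ket 0$ and the lemma follows.
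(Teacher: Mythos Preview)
Your high-level plan is the same as the paper's: a mergesort-style sweep that repeatedly extracts the current front elements of $S_1$ and $S_2$, appends the smaller one via $\append$, and advances a pointer. The difference is entirely in the bookkeeping, and there your proposal has a real gap. Within each iteration you write ``append \emph{and} increment the corresponding pointer, then run $\extract^{-1}$''; but once $p_b$ has been incremented, $\extract_{k_b}^{-1}$ no longer uncomputes the value you extracted. More seriously, the single comparison bit cannot simultaneously be (i) uncomputed from the extracted values before they are erased and (ii) still available afterwards to control which pointer to advance. You never say how that bit is cleared, and the ``second read-only pass'' you propose at the end addresses the final values of $p_1,p_2,q$ but not this per-iteration garbage.

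The paper resolves exactly this tension by not trying to reuse the comparison result: it allocates $K$ comparison trits, one per iteration, records each decision there, uncomputes the two candidate values \emph{before} incrementing (so the inverse extractions are valid), and only then increments the relevant $j$-counter controlled on that iteration's stored trit. After all $K$ rounds the union is written to a \emph{temporary} register, copied out, and then the entire merge circuit is run in reverse, which cleanly resets the $K$ trits and both counters in one shot. The $K$ trits cost $O(K)$ ancillas, already inside the claimed budget, so nothing is lost. Incidentally, in the algorithm's actual use the sets are always disjoint (the $v_i$ are distinct by construction), so your duplicate-handling branch is unnecessary and the terminal values of $p_1,p_2$ would in fact be known; but the per-iteration comparison-bit issue remains regardless, and the paper's ``record all decisions, then invert everything'' is the clean fix.
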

The basic idea is to simply extract the $j_1$\textsuperscript{th} and $j_2$\textsuperscript{th} elements of $S_1,S_2$ at a time, where $j_1$ and $j_2$ are the current values of two counter registers, appends the smaller of those elements to the output set, and increase either $j_1$ or $j_2$ depending on which one was added. This results in all elements in $S_1\cup S_2$ being added to the output set in increasing order. As such, we first show how to efficiently append an element to a set of known size. 
\begin{lemma}\label{lem:append}
	Let $S=\{v_1,\ldots,v_k\}$ be a set with $0<v_1<\cdots < v_k< 2n$ and let $v\in\{v_k+1,\ldots,2n\}$. Then the operation
	\begin{equation}
		\append_k\ket v\ket{\enc S}\ket{0} = \ket v \ket{\enc S\cup\{v\} }
	\end{equation}
	can be  implemented  reversibly using $O(\log n)$ ancilla bits and trits and $O(\poly(n))$ gates. 
\end{lemma}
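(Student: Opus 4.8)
The plan is to exploit the fact that, since $v>v_k=\max S$, the set $S\cup\{v\}=\{v_1,\dots,v_k,v\}$ has $v$ as its largest element, so $\ket{\enc(S\cup\{v\})}$ is obtained from $\ket{\enc S}$ simply by appending one more block: immediately after the last (the $k$\textsuperscript{th}) separation character one writes the binary representation of $\delta:=v-v_k$ followed by a fresh separation character, all remaining trits of the (zero-padded) encoding register staying in $0$. Thus it suffices to (a) compute $\delta$ into an ancilla register, (b) locate the position $p$ of the first trit after the $k$\textsuperscript{th} separator of $\ket{\enc S}$ together with the bit-length $\ell$ of $\delta$, (c) write $\delta$ (in $\ell$ bits) and then a separator into the data register starting at position $p$, and (d) return all scratch registers to $\ket 0$.

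For (a) I would invoke Lemma~\ref{lem:extract}: running $\extract_k$ with the index register preset to the known constant $j=k$ yields $v_k$ in an ancilla register, from which $\delta=v-v_k$ is formed reversibly using the input register $\ket v$; running $\extract_k^{-1}$ then clears the $v_k$ ancilla. Since $v\ge v_k+1$ and $v\le 2n$, we have $1\le\delta<2n$, so $\delta$ occupies $O(\log n)$ bits and no empty block can arise. For (b) I would scan the trits of $\ket{\enc S}$ from left to right while maintaining a separator counter and a position counter, incrementing the position counter only while the separator counter is still below $k$ and incrementing the separator counter on each separator trit; since the encoding contains exactly $k$ separators, this leaves the separator counter at the known value $k$ (hence resettable) and the position counter at $p$, and I would obtain $\ell$ from $\delta$ by locating its most significant set bit in $O(\log n)$ gates. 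For (c) I would reuse the controlled-copy device from the proof of Lemma~\ref{lem:extract}: for each of the $O(\poly(n))$ admissible pairs $(p',\ell')$ of values of the position and length registers, a gate controlled on those registers holding $(p',\ell')$ CNOTs the low $\ell'$ bits of $\delta$ onto trit positions $p',\dots,p'+\ell'-1$ and toggles trit $p'+\ell'$ between $0$ and the separator symbol. Exactly one of these fires, it uses no ancillas, and on valid inputs (where the affected trits are known to be $0$) it is injective, hence extends to a reversible circuit; afterwards the data register holds $\ket{\enc(S\cup\{v\})}$.

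The hard part is step (d): once (c) has been applied, the data register no longer carries $\ket{\enc S}$, so one cannot simply invert (a) and (b) to clear the registers still holding $\delta$, $p$ and $\ell$. I would resolve this by noting that these three quantities are deterministic functions of the \emph{new} state $\ket v\ket{\enc(S\cup\{v\})}$: $p$ and $\ell$ are recovered from the $(k+1)$\textsuperscript{th} block of the new encoding by exactly the $U_{1,k+1}$ position/length machinery used inside Lemma~\ref{lem:extract}, and $\delta=v-v_k$ where $v_k$ is obtained by $\extract_{k+1}$ applied to the new encoding with index $j=k$. Hence I would clear the $\delta$, $p$, $\ell$ registers by running the inverses of these ``recompute-from-the-new-encoding'' circuits, which also return their own $O(\log n)$ bits of internal scratch to $\ket 0$; the final state is $\ket v\ket{\enc(S\cup\{v\})}\ket 0$, as required. (Equivalently, one can organize the whole routine as a single compute--copy--uncompute: compute $(\delta,p,\ell)$ from $\ket{\enc S}$ into a payload register, uncompute all scratch by inverting on $\ket{\enc S}$, perform the write of (c), then uncompute the payload by the analogous computation on $\ket{\enc(S\cup\{v\})}$.) Finally, for the resource count: the separator and position counters, the registers for $v_k$, $\delta$ and $\ell$, and the internal scratch of the $\extract$ calls are each $O(\log n)$ bits and trits, while every scan over the encoding and every controlled-copy family costs $O(\poly(n))$ gates, which gives the claimed $O(\log n)$ ancillas and $O(\poly(n))$ gates.
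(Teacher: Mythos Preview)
Your proposal is correct and follows essentially the same approach as the paper: compute $v_k$ (via $\extract_k$), the difference $\delta=v-v_k$, its bit-length, and the append position from $\ket{\enc S}$; write the new block by the controlled-copy family; and then---crucially---uncompute the scratch using the \emph{new} encoding via $\extract_{k+1}$ and the $U_{1,k+1}$ position/length machinery, exactly as the paper does in its final step. The only cosmetic difference is that you clear $v_k$ early (before the write) and re-extract it afterwards to clear $\delta$, whereas the paper keeps $v_k,\delta,\ell,\pos$ all live across the write and uncomputes them together at the end; your parenthetical ``compute--copy--uncompute'' alternative is in fact precisely the paper's organization.
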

\begin{proof}
	The proof is very similar to the proof of Lemma~\ref{lem:extract}, the steps are as follows.
	
	\begin{enumerate}
	\item \label{appendStep1} Use $\extract_k$ to extract the value of the largest element $v_k\in S$ onto a separate register (note that $k$ is known). 
	
	\item\label{appendStep2} Calculate the difference of $v$ with that value and store that difference in a separate register of $O(\log n)$ bits which we call the difference register. 	The ancilla workspace now contains $\ket{v_k}\ket{v-v_k}$.
	
	\item\label{appendStep3}  Introduce a counter from $0$ to $O(\log\log n)$ which we call the length counter, and an additional control bit initially in $0$, and use them to find the number of relevant binary digits in the difference register (\textit{i.e.}, $\lceil \log_2(v-v_k)\rceil+1$) as follows. Starting from the most significant digit of the difference register, in turn do the following for each bit in the difference register: first,  controlled  on the bit being $1$ and end the length counter being $0$, flip the control bit. Then, controlled on the control bit being $1$, add $1$ to the length counter. 
	After doing this for all bits in the difference register, the length counter will be in $\lfloor \log_2(v-v_k)\rfloor+1$ and the control bit in $1$ (since we assume $v>v_k$). Flip the control bit to reset it. The ancilla workspace now contains $\ket{v_k}\ket{v-v_k}\ket{\lfloor\log_2 (v-v_k)\rfloor+1}$. 
	
	\item \label{appendStep4} Use the operation $U_{1,k}$ defined in the proof of Lemma~\ref{lem:extract} to find the position of the last (\textit{i.e.}, $k$\textsuperscript{th}) separation character in $\ket{\enc S}$. 	The ancilla workspace now contains $\ket{v_k}\ket{v-v_k}\ket{\lfloor\log_2 (v-v_k)\rfloor+1}\ket{\pos(S,k+1)}$.
	
	\item\label{appendStep5} Define a family of unitaries $V(j,l)$ which is a sequence of $l$ CNOTs copying the first $l$ bits in the difference register to the binary subspace of the $(j+1),\ldots,(j+l)$\textsuperscript{th} trits  of the set register, followed by flipping the $(j+l+1)$\textsuperscript{st} trit in the set register into a separation character (note that we assume that all these trits are initially in $0$). This operation is only well-defined for $j+l< \lceil (k+1)\log_2(n/(k+1))+4(k+1)\rceil$. For out of bounds values of $j,l$, we define $V(j,l)$ to be the identity.  Define $c-V(j,l)$ to be $V(j,l)$ controlled on the append position counter being $j$ and the length counter being $l$. Apply
	\begin{equation}
		\prod_{j=1}^{\lceil(k+1)\log_2(n/(k+1))+2(k+1)\rceil}\prod_{l=1}^{\lceil \log_2 n\rceil} c_V(j,l).
	\end{equation}
	This appends the value of $v-v_k$ (without leading zeros) and a separation character after the last separation character in the set register.
	
	\item To reset the ancilla workspace, first apply the inverse of step~\ref{appendStep4}, but replacing $U_{1,k}^{-1}$ with $U_{1,k+1}^{-1}$. This resets the ancilla register containing $\ket{\pos(S,k+1)}$. Then apply the inverse of step~\ref{appendStep3} and step~\ref{appendStep2}. This resets the ancilla registers containing $\ket{\lfloor\log_2 (v-v_k)\rfloor+1}$ and $\ket{v-v_k}$, respectively. Finally apply the inverse of step~\ref{appendStep1}, but replacing $\extract_k^{-1}$ with $\extract_{k+1}^{-1}$. This resets the ancilla register containing $v_k$ and thus all remaining ancillas.	\end{enumerate}
	Note that in the last step, the changes from $U_{1,k}^{-1}$ to $U_{1,k+1}^{-1}$ and $\extract_k^{-1}$ to $\extract_{k+1}^{-1}$, respectively, are necessary because after step~\ref{appendStep5}, the set register contains  a set of size $k+1$ and not $k$.   
\end{proof}
\begin{proof}[Proof of Lemma~\ref{lem:union}]
	We introduce the following ancilla registers: two counters from $1$ to $k_1$ and $k_2$, respectively, called the first and second $j$-counter, respectively, two registers of $O(\log n)$ bits called the first and second candidate registers, respectively, $K$ trits called comparison trits and $O(K\log (n/K) + K)$ bits to temporarily store $\ket{\enc S_1\cup S_2}$.
	
	As explained above, the basic idea is to simply extract the elements of $S_1,S_2$ corresponding to the current values of the $j$-counters, then add the smaller of those element to the output set, and increase the corresponding $j$-counter. Additional care however has to be taken to ensure that the algorithm still runs correctly when all elements of one of the sets have been added. As such, the algorithm is as follows. Sequentially, for each $j=0,\ldots,K-1$, do the following:
	
	\begin{enumerate}
	\item \label{step1} For $b=1,2$ in turn, run controlled-$\extract_{k_b}$ on $\ket{\enc S_b}$, the $b$\textsuperscript{th} $j$-counter and the $b$\textsuperscript{th} candidate register, where the operation is controlled on the $(3-b)$\textsuperscript{th} $j$-counter not being in $j-k_1$. Note that since the sum of the two $j$-counters is always equal $j$ at all stages, the latter condition is equivalent to all elements of $S_b$ having already been added. 
	\item \label{step2} For $b=1,2$ in turn, add $n+1$ to the $b$\textsuperscript{th} candidate register, controlled on the $(3-b)$\textsuperscript{th} $j$-counter being in $j-k_1$. This ensures that when all elements of one of the sets have already been added, the value in the candidate register corresponding to the other set is always smaller.
	 \item We compare the values of the two candidate registers and determine the smaller one. To do that, note that it is easy to efficiently and  reversibly implement the minimum finding operation $\Min\ket{v_1}\ket{v_2}\ket 0 = \ket {v_1}\ket{v_2}\ket{m(v_1,v_2)}$, where
\begin{equation}
	m(v_1,v_2) = 
	\begin{cases}
		1 & v_1 < v_2 \\
		2 & v_2 < v_1\\
		0 & v_1 = v_2,
	\end{cases}
\end{equation}
	using only $O(\log\log n))$ ancilla bits. We use the $j$\textsuperscript{th} comparison trit as the third register when calling $\Min$. 
	\item For $b=1,2$ in turn,  
we call controlled-$\append_j$ on the $b$\textsuperscript{th} candidate register and the temporary set register,  	
controlled on the $j$\textsuperscript{th} comparison trit being $b$.
	\item We apply the inverse of the operations in steps \ref{step1}-\ref{step2}. This resets both candidate registers to $0$. 
	\item For $b=1,2$ in turn, we add $1$ to the $b$\textsuperscript{th} $j$-counter controlled on the $j$\textsuperscript{th} comparison trit being in $b$.
	\end{enumerate}
	After doing this for $j=0,\ldots,K-1$, we copy the state of the temporary set register onto the output register and then apply the inverse of the entire circuit up to that point. This resets the comparison trits and both $j$-counters. 	
\end{proof}

We now describe how to implement $\merge_i$. 

The first step is to convert $v_i$ into $\enc \{v_i\}$
using $\append_0$. We then apply a sequence of \emph{merge operations} of sets that  follows the pattern of a binary addition: suppose we expand $i-1$ in binary and add $1$ to it using the standard addition procedure. Then, every carry bit corresponds to merging two sets.
More formally, let $g\geq 0$ be the largest integer such that $2^g$ divides $i$ (g specifies the the first non-zero position in the binary expansion of $i$ from the least significant bit position). Then, the sequence of merge operations is as follows. For $l=0,\ldots,g-1$, we merge the sets $\{v_{i-2^{l+1}+1},\ldots,v_{i-2^l}\}$ and $\{v_{i-2^l+1},\ldots,v_i\}$.

Each  merge of two sets $S_1=\{v_{i-2^{l+1}+1},\ldots,v_{i-2^l}\}$, $S_2=\{v_{i-2^l+1},\ldots,v_i\}$ has two parts. First, we use Lemma~\ref{lem:union} to compute $\ket{\enc S_1\cup S_2}$. Second, we uncompute $\ket{S_1}\ket{S_2}$ by running the inverse of the circuit from the end of the merge phase for $i-2^{l+1}$. Note that since $S_{1,2}$ always contain successive variables up to $v_i$, this is possible. 

We illustrate this procedure for $i=20$. Note that the binary expansion of $i-1$ is $19=16+2+1$, so $V_{19,1} = \{v_1,\ldots,v_{16}\}$, $V_{19,2}=\{v_{17},v_{18}\}$, $V_{19,3}=\{v_{19}\}$. Our aim is to go from
\begin{equation}
	\ket{\EffEnc{V_{19}}}\ket{\enc \{v_{20}\}}
	=
	\ket{\enc\{v_1,\ldots,v_{16}\}}\ket{\enc \{v_{17},v_{18}\}}\ket{\enc \{v_{19}\}}\ket{\enc \{v_{20}\}}
\end{equation} 
to
\begin{equation}	
	\ket{\enc\{v_1,\ldots,v_{16}\}}\ket{\enc \{v_{17},v_{18},v_{19},v_{20}\}} = \ket{\EffEnc V_{20}}.
\end{equation} 
Note that reversibility is preserved since the operation will also involve the $\ket{\vec s}$ register. 
First, we call $\union_{1,1}$ to compute 
\begin{equation}
	\union_{1,1}\ket{\enc \{v_{19}\}}\ket{\enc \{v_{20}\}}\ket 0 = \ket{\enc \{v_{19}\}}\ket{\enc \{v_{20}\}}\ket{\enc \{v_{19},v_{20}\}}.
\end{equation}
 We then uncompute $\ket{\enc \{v_{19}\}}\ket{\enc \{v_{20}\}}$ by running the inverse of the part of the circuit from the end of merge phase of $i=18$. Indeed, that part of the computation mapped $\ket{\enc \{v_1,\ldots,v_{16}\}}\ket{\enc\{v_{17},v_{18}\}}\ket0\ket0$ to $\ket{\enc \{v_1,\ldots,v_{16}\}}\ket{\enc\{v_{17},v_{18}\}}\ket{\enc\{v_{19}\}}\ket{\enc\{v_{20}\}}$. Thus, running the inverse of this part of the circuits results in $\ket{\enc \{v_1,\ldots,v_{16}\}}\ket{\enc\{v_{17},v_{18}\}}\ket{\enc \{v_{19},v_{20}\}}$. Next, we call $\union_{2,2}$, which produces \begin{equation}
\union_{2,2}\ket{\enc\{v_{17},v_{18}\}}\ket{\enc \{v_{19},v_{20}\}}\ket 0 = \ket{\enc\{v_{17},v_{18}\}}\ket{\enc \{v_{19},v_{20}\}}\ket{\enc\{v_{17},v_{18},v_{19},v_{20}\}}.
\end{equation}
 We then uncompute $\ket{\enc\{v_{17},v_{18}\}}\ket{\enc \{v_{19},v_{20}\}}$ by running the inverse of the part of the circuit from the end of the merge phase for $i=16$. This results in 
\begin{equation}
\ket{\enc \{v_1,\ldots,v_{16}\}}\ket{\enc\{v_{17},v_{18},v_{19},v_{20}\}},
\end{equation} which is the desired result.

What remains to be seen is the runtime scaling of $\merge_i$. 
We show now that the runtime of each merge operation (comprising calculating the union of two sets and uncomputing these sets) scales polynomially with the number of elements involved. 
To see this, we first show that the runtime $M_l$ of the operation mapping $\ket{\EffEnc V_{i-2^l}}\ket 0$ to $\ket{\EffEnc V_{i-2^l}}\ket{\enc \{v_{i-2^l+1},\ldots,v_i\}}$ for any $i$ is bounded by $M_l=O^\ast (4^l)$, where we assume that $\ket{\EffEnc V_{i-2^l}}$ has no sets of  $2^l$ elements or less. 
 This can be seen by induction. The claim is trivial for $l=0$. Suppose now that this is true for any $i$ and any $l'<l$. Then, the operation that maps $\ket{\EffEnc V_{i-2^{l+1}}}\ket0$ to  $\ket{\EffEnc V_{i-2^{l+1}}}\ket{\enc \{v_{i-2^{l+1}+1},\ldots,v_i\}}$ comprises two parts. First, we map $\ket{\EffEnc V_{i-2^{l+1}}}\ket0$ to $\ket{\EffEnc V_{i-2^{l+1}}}\ket{\enc\{v_{i-2^{l+1}+1},\ldots,v_{i-2^l}\}} = \ket{\EffEnc V_{i-2^l}}$. This operation takes runtime $M_l$. Next, we map $\ket{\EffEnc V_{i-2^l}}\ket 0$ to $\ket{\EffEnc V_{i-2^l}}\ket{\enc \{v_{i-2^l+1},\ldots,v_i\}}$, which also takes runtime $M_l$. Next, we call $\union_{2^l,2^l}$ to compute 
 \begin{multline}
 \union_{2^l,2^l}\ket{\enc \{v_{i-2^{l+1}+1},\ldots,v_{i-2^l}\}}\ket{\enc \{v_{i-2^{l}+1},\ldots,v_i\}}\ket 0
 =\\
 \ket{\enc \{v_{i-2^{l+1}+1},\ldots,v_{i-2^l}\}}\ket{\enc \{v_{i-2^{l}+1},\ldots,v_i\}}\ket{\enc \{v_{i-2^{l+1}+1},\ldots,v_i\}}.
 \end{multline}
 The runtime of this call can be bounded by a polynomial  $p(n)$ that is independent of $l$ or $i$. Finally, to uncompute $\ket{\enc \{v_{i-2^{l+1}+1},\ldots,v_{i-2^l}\}}\ket{\enc \{v_{i-2^{l}+1},\ldots,v_i\}}$, we apply of the inverse of the two operations before the call to the union, each taking runtime $M_l$. This implies $M_{l+1}\leq 4M_l + p(n)$, which clearly gives $M_l = O^\ast(4^l)$, as claimed. In particular, this implies that merging $\{v_{i-2^{l+1}+1},\ldots,v_{i-2^l}\}$ and $\{v_{i-2^{l}+1},\ldots,v_{i}\}$, comprising of first calculating their union and then uncomputing the original sets, takes runtime at most $O^\ast(4^l)$. 
 
 Since $i\leq r$, we need to do this operation at most once for each $l\in\{0,\ldots,\lceil \log_2 r\rceil\}$. Thus, the runtime of $\merge_i$ is at most 
 \begin{equation}
 	O^\ast\left( \sum_{l=0}^{\lceil \log_2 r\rceil} 4^l \right) = O^\ast(\poly(r)) = O(\poly(n)).
 \end{equation}
 
 This proves Proposition~\ref{prop:phases}(\ref{prop:phasespart2}). \qed
 
 \subsubsection{Quantum algorithm for PBS}
  
 To summarize, we have proven the following

\begin{proposition}[Classical reversible circuit for $\textsc{QBall}_1$] \label{prop:classicalQBALL1}  			 The map $\ket{\vec s}\ket 0\mapsto \ket{\vec s}\ket{\EffEnc V(\vec s)}$ can be implemented reversibly using $O(r\log (n/r) + r + \log n)$ ancillas and $O(\poly(n))$ gates.
		
\end{proposition}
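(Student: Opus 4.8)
The plan is to build the circuit as the obvious two-phase loop over $i=1,\dots,r$ and then to bookkeep the shared resources, leaning on Proposition~\ref{prop:phases} for everything that happens inside a single iteration. First I would fix, besides the input register $\ket{\vec s}$, three groups of (qu)bits: an \emph{encoding register} of $O(r\log(n/r)+r)$ trits holding $\ket{\EffEnc V_i}$, sized for the worst case $i=r$ and initialized to all zeros (so that it represents $V_0=\emptyset$); a \emph{candidate register} of $O(\log n)$ bits that transiently holds one variable index $v_i$; and a \emph{scratch register} of $O(r\log(n/r)+r+\log n)$ (qu)bits for the internal ancillas of $\calculate_i$ and $\merge_i$. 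The circuit then runs $\calculate_1,\merge_1,\calculate_2,\merge_2,\dots,\calculate_r,\merge_r$ in this order, where $\calculate_i$ reads $s_i$ from the $\ket{\vec s}$ register together with the encoding register and writes $v_i$ into the candidate register, and $\merge_i$ reads the candidate register, folds $v_i$ into the set in the efficient encoding, and uncomputes the candidate register back to $0$; by Proposition~\ref{prop:phases}(\ref{prop:phasespart1}) and~(\ref{prop:phasespart2}) both subroutines leave all of their scratch ancillas in $0$.

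Correctness I would prove by a one-line induction on $i$: if the state before iteration $i$ is $\ket{\vec s}\ket{\EffEnc V_{i-1}}\ket 0\ket 0$, then the defining identity of $\calculate_i$ gives $\ket{\vec s}\ket{\EffEnc V_{i-1}}\ket{v_i}\ket 0$ and the defining identity of $\merge_i$ then gives $\ket{\vec s}\ket{\EffEnc V_i}\ket 0\ket 0$; the base case is $V_0=\emptyset$, for which the encoding register is all zeros. After the $r$-th iteration the state is $\ket{\vec s}\ket{\EffEnc V_r}\ket 0\ket 0=\ket{\vec s}\ket{\EffEnc V(\vec s)}\ket 0\ket 0$, which is the asserted map up to the ancillary $\ket 0$'s. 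I would also remark that the whole circuit depends only on $F$ and $r$ — the ball center $\mathbf x$ having been subsumed into $F$ — and not on $\vec s$, so it is a bona fide fixed reversible circuit that can be applied coherently, as is needed for the subsequent amplitude amplification.

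For the resource count: the encoding register uses $O(r\log(n/r)+r)$ (qu)bits, exactly the bound established earlier for $\ket{\EffEnc\cdot}$; the candidate register adds $O(\log n)$; and because \emph{both} $\calculate_i$ and $\merge_i$ return \emph{all} of their ancillas to $0$, one scratch register of size $O(r\log(n/r)+r+\log n)$ suffices for \emph{all} $r$ iterations rather than once per iteration. Summing gives an ancilla count of $O(r\log(n/r)+r+\log n)$. For the gate count, Proposition~\ref{prop:phases} bounds each $\calculate_i$ and each $\merge_i$ by $O(\poly(n))$ gates, and $r\leq n$, so the full circuit has $O(\poly(n))$ gates.

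The hard part is not this composition — which is essentially bookkeeping — but the resetting discipline it rests on, and that work has already been done in Proposition~\ref{prop:phases}: in particular the recursive uncomputation inside $\merge_i$ (re-running the inverse of an earlier merge sub-circuit to clear the two sub-set registers) together with the amortized $M_{l+1}\leq 4M_l+p(n)$ recursion that keeps it polynomial, and the ancilla-reset analyses of $\extract_k$, $\contains_k$, $\ch_{j,i}$, $\select_i$, $\append_k$ and $\union_{k_1,k_2}$. At the level of this proposition the only genuinely new things to check are the clean $\ket{v_i}$ hand-off between the calculate and merge phases and the fact that no ancilla state survives an iteration, so that the worst-case scratch budget, and not $r$ times it, governs the final bound.
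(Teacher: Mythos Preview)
Your proposal is correct and matches the paper's approach exactly: the paper treats this proposition as a summary of the preceding development, and the intended argument is precisely the composition $\calculate_1,\merge_1,\dots,\calculate_r,\merge_r$ with the ancilla-resetting guarantees of Proposition~\ref{prop:phases} permitting the same scratch space to be reused across all $r$ iterations. Your explicit induction and resource bookkeeping just spell out what the paper leaves implicit.
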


For completeness, we also show how to (classically) reversibly imeplement $\qball_2$.

\begin{proposition}[Classical reversible circuit for $\textsc{QBall}_2$]\label{prop:classicalQBALL2}
	The map $\ket{\EffEnc V(\vec s)} \ket 0 \mapsto \ket{\EffEnc V }\ket{F(x(V))}$ can be implemented reversibly using $O(\log n)$ ancillas and $O(\poly(n))$ gates. 
\end{proposition}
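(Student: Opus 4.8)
The plan is to assemble $\textsc{QBall}_2$ directly from the clause-checking subroutine $\ch$ of Lemma~\ref{lem:check}, using the same counter-and-uncompute pattern already employed in the calculate phase. Since a variable (possibly a dummy $x_{n+i}$) is added in every one of the $r$ rounds, the set $V = V(\vec s)$ always has exactly $r$ elements, so the block structure of $\ket{\EffEnc V}$ is fixed and known in advance; we may therefore apply the version of $\ch$ matching this known size to each clause in turn.

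Concretely, I would first introduce a clause counter running from $0$ to $L$ (using $O(\log L) = O(\log n)$ bits, since $L = O(\poly(n))$) together with a single $\ket 0$ register to hold the output of $\ch$; the $O(\log n)$ workspace ancillas used inside $\ch$ are returned to $0$ by Lemma~\ref{lem:check}. Then, for $j = 1, \ldots, L$ in turn: run $\ch$ for clause $C_j$ on $\ket{\EffEnc V}$ to write $C_j(V)$ onto the result bit; add $1$ to the clause counter controlled on that bit being $1$; then run $\ch^{-1}$ to reset the result bit (and all internal workspace) to $0$. After this loop the clause counter holds $\sum_{j} C_j(V)$ and all other ancillas are back in $\ket 0$. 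Next, flip the output bit controlled on the clause counter being equal to $L$ --- a reversible number comparison using $O(1)$ additional ancillas --- which deposits exactly $F(x(V)) = \bigwedge_{j} C_j(V)$ into the output register. Finally, I would run the inverse of the whole clause loop (re-running $\ch$, decrementing the counter controlled on $C_j(V)$, then $\ch^{-1}$) to restore the clause counter and all remaining ancillas to $0$; this is the same reversal device used in step~\ref{step:calculate1} of the calculate phase.

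Correctness is immediate: the dummy indices $n+1, \ldots, n+r$ never occur in any clause, so when $\ch$ tests (via $\contains$/$\extract$) whether the at most three variables of $C_j$ lie in $V$, the dummy entries of $\ket{\EffEnc V}$ are automatically ignored and $C_j$ is evaluated under $x(V)$ restricted to the genuine variables, as desired; moreover the blocks of $\ket{\EffEnc V}$ partition $V$ and are in particular disjoint, which is the hypothesis of Lemma~\ref{lem:check}. For the resources: the only persistent ancillas are the $O(\log n)$-bit clause counter, the $O(\log n)$ workspace of $\ch$, and $O(1)$ comparison ancillas, for $O(\log n)$ ancillas overall; the circuit makes $O(L) = O(\poly(n))$ calls to $\ch$ and its inverse, each of $O(\poly(n))$ gates by Lemma~\ref{lem:check}, plus one comparison, for $O(\poly(n))$ gates in total. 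The only point demanding care --- entirely routine given the lemmas --- is the bookkeeping that guarantees every ancilla is returned to $0$, so that the construction is a genuine reversible implementation in the sense of the definition above; no ideas beyond those already developed for $\textsc{QBall}_1$ are needed.
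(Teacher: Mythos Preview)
Your proposal is correct and takes essentially the same approach as the paper: build $\textsc{QBall}_2$ out of the $\ch$ subroutine together with a clause counter, then copy out the answer and uncompute. The only cosmetic difference is that the paper reuses the $G_{j,r+1}$ gadgets verbatim from the calculate phase (which locate the \emph{first} unsatisfied clause and leave a result bit holding $1-F(x(V))$), while you instead count the number of satisfied clauses and compare to $L$ (matching the description in the main text); both variants are valid and have identical $O(\log n)$-ancilla, $O(\poly(n))$-gate scaling.
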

\begin{proof}
	The algorithm is similar to $\calculate_i$. Let $G_{j,i}$ be defined as in the implementation of $\calculate_i$. We can assume that $V=V(\vec s)$ has exactly $r$ elements. The algorithm is as follows.
	\begin{enumerate}
	\item \label{step:qball2_1} Run $G_{1,r+1}\ldots G_{L,r+1}\ket{\EffEnc V_{r}}\ket 0\ket 0$. This results in $\ket{\EffEnc V_{r}}\ket{L+1-j_{\min }}\ket{1-F(x(V_{i-1}))}$, where $j_{\min }$ is the smallest $j$ such that $C_j$ is unsatisfied under $x(V_{r})$ if such a clause exist, and $L+1$ otherwise.
	\item Use a CNOT to copy the last bit onto the output bit, invert it, and apply the inverse of step~\ref{step:qball2_1} to reset the ancillas. \qedhere
\end{enumerate}
\end{proof}

As mentioned above, this can now easily be turned into a quantum algorithm for PBS.

\begin{theorem}[Quantum algorithm for PBS]
	There exists a quantum algorithm that solves $r$-PBS in runtime $O^\ast(3^{r/2})$, using at most $O(r\log(n/r)+r+\log n)$ qubits.
\end{theorem}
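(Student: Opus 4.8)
The plan is to assemble the two reversible circuits of Propositions~\ref{prop:classicalQBALL1} and \ref{prop:classicalQBALL2} into one circuit $\textsc{QBall}_{12}$ and then drive a Grover-type search over the choice vector $\vec s$. First I would subsume the ball center into $F$, so that w.l.o.g.\ the center is $(0,\ldots,0)$. Composing the circuit computing $\ket{\vec s}\ket 0 \mapsto \ket{\vec s}\ket{\EffEnc V(\vec s)}$ with the one computing $\ket{\EffEnc V(\vec s)}\ket 0 \mapsto \ket{\EffEnc V(\vec s)}\ket{F(x(V(\vec s)))}$ yields a reversible classical circuit
\[
	\ket{\vec s}\ket 0 \ket 0 \;\longmapsto\; \ket{\vec s}\,\ket{\EffEnc V(\vec s)}\,\ket{F(x(V(\vec s)))},
\]
which, read as a quantum circuit, is a unitary $\textsc{QBall}_{12}$ using $O(r\log(n/r)+r+\log n)$ ancilla qubits and $O(\poly(n))$ elementary gates. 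By the defining property of $V(\vec s)$ recalled at the start of this section, $F$ admits a satisfying assignment of Hamming weight $\le r$ if and only if there is some $\vec s\in\{1,2,3\}^r$ with $F(x(V(\vec s)))=1$; so solving $r$-PBS amounts to deciding whether the Boolean function $\vec s\mapsto F(x(V(\vec s)))$ has a $1$-input, and, if so, exhibiting one.

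Next I would run amplitude amplification \cite{2000_Brassard} on the register holding $\vec s$. I would encode each of the $r$ trits in two qubits (so the search register uses $2r$ qubits), prepare the uniform superposition over the $3^r$ vectors as a product of single-trit uniform states (cost $O(r)$), and implement the Grover oracle by running $\textsc{QBall}_{12}$, phase-flipping on the output qubit, and running $\textsc{QBall}_{12}^{-1}$ to reset every ancilla to $\ket 0$; the diffusion operator then acts only on the $2r$-qubit search register (reflecting about the uniform state via the state-preparation unitary and its inverse). Because the number $t$ of marked vectors is not known in advance (and is $0$ precisely when the ball contains no satisfying assignment), I would wrap this in the standard exponential-search schedule: perform $O(\log 3^r)=O(r)$ rounds with geometrically increasing iteration counts, after each round measure $\vec s$ and classically recompute $V(\vec s)$ and test $F(x(V(\vec s)))$, outputting a satisfying assignment as soon as one is found, and declaring ``no assignment of weight $\le r$'' once a total of $O(3^{r/2})$ oracle calls have been spent without success. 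A constant number of independent repetitions boosts the failure probability below any desired constant.

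For the bookkeeping: the search uses $O(3^{r/2})$ calls to $\textsc{QBall}_{12}$ and its inverse, each of cost $\poly(n)$, plus $\poly(n)$ classical work per round over $O(r)$ rounds, giving total runtime $O^\ast(3^{r/2})$. The qubit count is $2r$ (search register) $+\,O(r\log(n/r)+r+\log n)$ ($\textsc{QBall}_{12}$ ancillas) $+\,O(1)$ (phase kickback and the multiply-controlled phase in the diffusion step) $=O(r\log(n/r)+r+\log n)$, since $2r$ is absorbed into this bound. The step I expect to need the most care is the interface between amplitude amplification and the decision-plus-search nature of PBS: one has to handle the $t=0$ case so that ``unsatisfiable in the ball'' is reported rather than a spurious assignment, and one has to be sure the oracle is a genuine reflection — i.e.\ that $\textsc{QBall}_{12}^{-1}$ really restores the entire $O(r\log(n/r)+r+\log n)$-qubit ancilla block to $\ket 0$ — since any residual entanglement with those registers would both spoil the diffusion step and, in effect, inflate the qubit budget. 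Both points are routine given Propositions~\ref{prop:classicalQBALL1} and \ref{prop:classicalQBALL2}, which already guarantee clean uncomputation of all ancillas.
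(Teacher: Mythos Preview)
Your proposal is correct and follows essentially the same approach as the paper: compose the reversible circuits of Propositions~\ref{prop:classicalQBALL1} and~\ref{prop:classicalQBALL2} into $\textsc{QBall}_{12}$ and run amplitude amplification over $\vec s\in\{1,2,3\}^r$. The paper's proof is terser (it simply prepares the uniform superposition, applies $\textsc{QBall}_1$ then $\textsc{QBall}_2$, and invokes amplitude amplification or fixed-point search \cite{2014_Yoder}), whereas you have spelled out the standard implementation details---the compute/phase-flip/uncompute oracle, the exponential-search schedule for unknown solution count, and the $t=0$ case---all of which are routine and consistent with the paper's intent.
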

 \begin{proof}
 First, quantize the classical reversible circuits of Proposition~\ref{prop:classicalQBALL1}--\ref{prop:classicalQBALL2} by turning each (reversible) classical gate into its quantum equivalent. Then, We initialise the $\ket{\vec s}$ register into 
 \begin{equation}
 	\frac1{\sqrt{3^r}}\sum_{s_1,\ldots,s_r=1}^3 \ket{s_1,\ldots,s_r}
 \end{equation}
 and apply the (quantum) $\textsc{QBall}_{1}$ followed by $\textsc{QBall}_2$. The latter produces the state
 \begin{equation}
 	\frac1{\sqrt {3^r}}\sum_{s_1,\ldots,s_r=1}^3 \ket{s_1,\ldots,s_r}\ket{V(\vec s)}\ket{F(V(\vec s))}.
\end{equation}  
The last step is to run amplitude amplification (or alternatively, fixed point search \cite{2014_Yoder}) to increase the overlap with $1$ on the last qubit. This uses at most $O(\sqrt{3^r})$ repetitions of $\textsc{QBall}_{1,2}$. The overall runtime of the quantum algorithm is therefore $O^\ast(3^{r/2})$. 
\end{proof}

\subsection{Hybrid algorithm and runtime analysis}
\label{sm:runtimehybrid}

In this section, we provide the details of the full quantum-enhanced algorithm to solve 3SAT using a small quantum device. 
In Section~\ref{SM0} we provide the basic ideas and the broad overview. The subsequent sections provide all the details of the analysis for completeness. Specifically, in Section~\ref{SM1}, we first look at general runtime properties of the space splitting algorithm which reduces 3SAT to PBS. In Section~\ref{SM2}, we summarise the classical algorithm to solve PBS from \cite{2011_Moser}, which is more efficient than $\promiseball$. In Section~\ref{SM3}, we show how that algorithm can be quantum enhanced using $\qball$ and derive it's runtime. The full runtime to solve 3SAT is finally derived in Section~\ref{SM4}.

\subsubsection{Key ideas on the hybrid algorithm}
\label{SM0}
As clarified in the main text, the results of \cite{2002_Dantsin} have shown that an algorithmic speed-up for the Promise Ball problem can lead to a faster algorithm for SAT solving. This connection is given quantitatively with Eq. (\ref{eq} ) in Section~\ref{SM1} below. 
Specifically, any algorithm which solves Promise Ball faster than the algorithm of Sch\"{o}ning (recall, Sc\"{o}ning sampling can equivalently be used to solve Promise Ball) will also outperform Sch\"{o}ning's algorithm 
on the overarching SAT problem itself.
Thus the focus of our approach is in providing speed-ups for the Promise Ball problem. 

In the previous section of this Supplemental Materials, we have provided the details of the quantum algorithm which space-efficiently solves the Promise Ball problem, provided sufficiently many qubits are available. 
However, in the space-restricted scenario, we have to resort to other methods.
Specifically, we focus on a more advanced algorithm, a de-randomization of the algorithm of Sch\"{o}ning, for classical Promise Ball solving provided in \cite{2011_Moser}. This algorithm has two features critical for our purposes. First, even without quantum enhancements, this algorithm is (essentially) as fast as the original algorithm of Sch\"{o}ning. This ensures that no thresholds will emerge, relative to the algorithm of Sch\"{o}ning, as any speed-up will imply beating the original Sch\"{o}ning run-time \footnote{More specifically, since we quantum-enhance the algorithm of \cite{2011_Moser}, our approach circumvents a threshold phenomenon relative to this algorithm. Since the algorithm of \cite{2011_Moser} is referred to as a de-randomization of the algorithm of {Sch\"{o}ning}, with matching run-time, we for simplicity talk about avoiding the threshold relative to the algorithm of {Sch\"{o}ning} itself.}.
Second, it is a divide-and-conquer algorithm: it works by calling itself on an instance smaller relative to the all the important parameters. This is in general a non-trivial demand: the Promise Ball problem comes with at least two relevant parameters: radius $r$ and the number of variables $n$. In terms of classical time-complexity, the run-time of the algorithm for Promise Ball is exponential only with respect to $r$, and not $n$ \footnote{For the interested reader we point out that the fine-graining of complexity of a given problem with respect to multiple parameters of a given instance, as is the case here, is studied by so-called parametrized complexity theory.}, so recursing over $r$ makes sense in the classical regime.
However, since we are dealing with a (very) limited-size machine, reducing the instance size for Promise Ball only relative to the radius $r$ would not in general suffice. One would still seemingly need to represent the entire formula on the quantum device, which is impossible since we assume we have (significantly) fewer qubits than $n$. Luckily, we have shown that we do not need to carry the representation of the formula as input. This is critical for the standard hybrid approach, where the quantum routine is invoked when the instance is small enough, to be applicable. 

A final contribution of this Section is the overall run-time analysis, provided in Sections \ref{SM3} and \ref{SM4}. The key technical subtleties of this analysis are two-fold. First, a quantum machine which can handle Promise Ball over $n$ variables up to radius $r$ requires somewhat more than $r$ (qu)bits. 
On the other hand, the expressions quantifying run-time incorporate the quantities related to the radius $r$ that can be handled. 
This implies that any expression which quantifies the total run-time relative to a size of the quantum device must include the explicit functional relationship between $r,$ $n$  and the number of qubits we need to solve Promise Ball with  parameters $(n,r)$ on a quantum computer.
The second issue has to do with details of the faster algorithm in \cite{2011_Moser}, where the recursive calls reduce the instance sizes with respect to non-unit steps of $\Delta>1,$ where $\Delta$ influences the efficiency of the overall algorithm. Due to this, the precise analysis of the achieved speed-up via the standard hybrid approach is slightly more involved than for the basic, slower, Promise Ball algorithm presented in the main text. 
 These technical points are elaborated in detail in section \ref{SM3}.

\subsubsection{runtime properties of the algorithm of \cite{2002_Dantsin}}
\label{SM1}
In \cite{2002_Dantsin}, it was shown that using the space splitting algorithm (which reduces 3SAT to PBS), 3SAT  over $n$ variables can be solved in time
\EQ{
T(n) =  \underbrace{q_{d}(n) (2^{3n/d}\! +\! 2^{(1-h(\rho))n})}_{\textup{Cover}\ \textup{Set}\ \textup{preparation}}\! +\! \underbrace{q_{d}(n) 2^{(1-h(\rho))n}}_{\textup{Cover}\ \textup{Set}\ \textup{size}}\! \times \underbrace{T_{2}(n,\rho)}_{\textup{PBS}\ \textup{cost}}, \label{eq}
}
where $\rho$ (the fraction specifying the radius of the balls via $r = \rho n$) and $d$ are parameters which can be optimized, $q_{d}$ is a polynomial depending on $d$, $h(\rho) = - (\rho \log_2(\rho) + (1-\rho) \log_2(1-\rho))$ is the binary entropy function, and $T_{2}(n,\rho)$ is the runtime of the algorithm used to solve PBS with $n$ variables and radius $r=\rho n$.

 Sch\"{o}ning's algorithm can also be understood as a PBS solver. It can be shown that  Sch\"oning sampling, starting from a center $\mathbf{x}$, which is at Hamming distance $r$ from a satisfying assignment, produces a satisfying assignment with probability at least $2^{-r}$ \cite{2011_Moser}. By iteration, we obtain a PBS solver with runtime $O^{\ast}(2^r)$.  
  
Since $d$ can be chosen large enough such that the dominating term of $T(n)$ is $2^{(1-h(\rho))n} \times T_{2}(n,\rho)$ \footnote{Since $d$ appears in $2^{3n/d}$, to make sure this contribution is not a dominating term in the overall complexity, setting $d=15$ will suffice as $2^{n/5} \leq 2^{\gamma_q n}$)}
 to quantify the improvement given a quantum-enhanced subroutine, it suffices to optimize this term with respect to $\rho$. If (the dominating part of) $T_2$ is of the form $T_2(n,\rho) = O^\ast(2^{\zeta \rho n})$, the optimum is obtained by just minimizing $1-h(\rho) + \zeta \rho$. Important values of $\zeta$ are $\zeta=\log_2{(3)}$ and $\zeta=1$, corresponding to PBS solved by the deterministic  algorithm of \cite{2002_Dantsin} and the basic (and also fast deterministic) Schoning's algorithm, attaining optima at values $\rho = 1/4 $, and $\rho = 1/3$, respectively.
For completeness, the overall effective values $\gamma,$ for the overall algorithm using PBS routines as listed are approximately $0.58$ and $0.41,$ where the latter matches the runtime of  Sch\"{o}ning's (original) algorithm.

\subsubsection{Fast deterministic classical PBS solver}
\label{SM2}
 In  \cite{2011_Moser}, an improved deterministic classical PBS solver was introduced, with run time in 
 $O^{\ast}( (2+\epsilon)^r),$ where $\epsilon$ depends on tunable protocol parameters. We call this algorithm $\textsc{FastBall}$. These parameters can be chosen such that $\epsilon$ is arbitrarily small or even decaying in $r$, as explained later.
 As mentioned, the key idea of this algorithm is to also split the space of choices, selecting which literal will be flipped in the recursive call, into covering balls. 
 For the convenience of the reader, here we present the details of $\textsc{FastBall}$, adapted from \cite{2011_Moser}.
Let $t \in \mathbbmss{N}$ be a parameter (influencing $\epsilon$ in the overall runtime), and $\mathcal{C} \subseteq \{1, \ldots, k \}^t$ be a $k-$arry covering code with radius $t/k$ (where $k$ specifies the clause upper bound in the $k$SAT problem to be solved).
Since $t$ and $k$ are constants, the optimal code can be found in constant time.
The key steps and performance aspects of the algorithm are given next abbreviated and adapted from \cite{2011_Moser}.
\begin{figure}[H]
\begin{algorithmic}[1]
\Procedure{\textsc{FastBall}}{$F, \mathbf{x}, {r},\mathcal{C}$}
  \State If $\mathbf{x}$ satisfies $F$
  \State \indent return $\mathbf{x}$
  \State else if $r = 0$
  \State \indent return FALSE 
  \State else
  \State \indent $G \leftarrow $ a maximal set of pairwise disjoint $k$-clauses of $F$ unsatisfied by $\textbf{x}$
   \State\label{case1}  \indent if $|G| < t$,  \Comment{Case 1}
   \State \indent \indent for each assignment $\beta$ of variables in $G$
   \State \indent \indent \indent call \textsc{PromiseBall}$(F_{| \beta},\mathbf{x},r),$ 
 \State \label{case2} \indent else if $|G| \geq t$  \Comment{Case 2}
\State \indent \indent  $H \leftarrow \{C_1,\dots,C_t\} \subseteq G$
  \State \indent \indent for each {$w \in \mathcal{C}$}
\State\label{FastBallRecursionStep}  \indent \indent \indent  call $\textsc{FastBall}(F, \mathbf{x}[H,w], r- \Delta, \mathcal{C})$  \Comment $\Delta \mathop{:}=  t-2t/k $
\EndProcedure
\end{algorithmic}
\end{figure}
The parameter $\Delta$ influences $\epsilon$, as will be clarified presently. 
The expression $\mathbf{x}[H,w]$ corresponds to a modified assignment, where the variables selected by the code-word $w$ from the subset of variables occurring in the subset of clauses $H$ have been flipped.

The objective of the algorithm is to achieve the run time of essentially $O^\ast((k-1)^r),$ which would be the runtime achieved by the randomized algorithm of Sch\"{o}ning, used as a PBS solver.
We now briefly discuss the runtime of this algorithm.

In the case the problem is such that the algorithm always encounters Case 1 in line~\ref{case1}, the authors in \cite{2011_Moser} show that
each $F_{| \beta}$ has no clauses of size larger than $k-1$. In this case,  Proposition 7 in \cite{2011_Moser} shows that  \textsf{PromiseBall} can solve the problem in $O^\ast((k-1)^r)$, yielding the overall runtime is $ O^{\ast}(2^{kt}(k-1)^r)$. Since $2^{kt}$ is a constant, this is $O^\ast((k-1)^r),$ achieving the objective.

The more complex case involves occurrences of Case 2 (in line~\ref{case2}). Here, recurrence calls occur, which may encounter Case 1 deeper in the tree, or not. The slowest case occurs when we remain in Case 2 throughout recurrence calls.
If we set $\Delta =  t-2t/k $, it was shown in \cite{2011_Moser}  that
the runtime of $\textsc{FastBall}$, which, up to polynomial factors, is the number of leaves in the recursion tree, is $O^\ast\left((t^2 (k-1)^{\Delta})^{r/\Delta}\right) = O^\ast\left((t^{2/\Delta} (k-1))^r\right)$. Since $t^{2/\Delta}$ goes to 1 as $t$ grows, for any $\epsilon>0$ we can choose $t=t(\epsilon)$ such that $(t^{2/\Delta} (k-1))^r \leq (k-1 + \epsilon)^{r},$ which is the main result. 
Note that $t$ can also be chosen to be a very slowly growing function of $n$, $t = \log_k \log_2 (n)$, which guarantees that the runtime of the algorithm, intuitively, approaches the expression of the form $O^{\ast}((k-1)^r)$ as $n$ grows (see \footnote{http://users-cs.au.dk/dscheder/SAT2012/searchball.pdf}).

\subsubsection{Quantum speedup of $\textsc{FastBall}$ with a  small quantum device}\label{SM3}
The algorithm \textsc{FastBall} recursively calls itself on smaller instances, where the value of $r$ is reduced in steps of $\Delta$ (where $\Delta = t/3$ for 3SAT), and $\Delta$ is a function which depends on $\epsilon$ alone -- hence if $\epsilon$ is fixed, $\Delta$ is a constant. 

At each recurrence step, the algorithm first checks whether a criterion which would ensure that \textsc{PromiseBall} would terminate in $O^\ast((k-1)^{r})$ ($O^\ast(2^{r})$  for 3SAT) steps, is satisfied (Case 1). In that case, the algorithm runs $\textsc{PromiseBall}$, the quantum enhancement of which was investigated in the main body of the paper. 
The faster runtime in that case is ensured by the property that the formula in question actually has at most $(k-1)$ (2 for 3SAT) variables per unsatisfied clause. In that case, a quantum enhancement is achieved by the standard hybrid approach, where  \textsc{QBall}  is run as soon as the instance becomes small enough. Note that since the relevant formula has only $k-1$ variables per clause, $\textsc{QBall}$ can be adapted in this case to yield a runtime of $O^\ast((k-1)^{ r/2})$ if $ r$ is small enough. 
Then, we obtain an interpolated time between Sch\"{o}ning-level performance, and something quadratically faster. We do not need to delve on further details, since this is not the worst-case performance of the algorithm, which occurs if Case 2 persists.
We cannot beforehand know at which step, if at all, the criteria  for Case 1 will be satisfied, so  we call $\textsc{QBall}$ in Case 2 as well, as soon as the recursive step in line~\ref{FastBallRecursionStep} calls an instance with sufficiently small $r$. This $\qball$-enhanced version of $\fastball$ is a hybrid algorithm which we call $\textsc{QFastBall}$.

We now estimate the runtime of $\textsc{QFastBall}$. Recall that we assume a quantum computer with $M=cn$ qubits, where $c\in(0,1)$ is an arbitrary constant. To obtain the runtime of $\textsc{QFastBall}$, we first determine the largest value of $\tilde r=\tilde r(c,n)$ such that such a quantum computer can solve PBS of radius $\tilde r$.  

Recall that $\textsc{QBall}$ requires $O( r\log(n/ r)+ r + \log n)$ qubits to solve PBS with $n$ variables and radius $ r$. Suppose the exact scaling is of this number of qubits is $Ar\ln(n/r) + Br + O(\log n)$ for constants $A,B>0$ \footnote{The implementation given in Section~\ref{sm:qPBS} yields $A=10$ and $B=50$ using a straightforward of encoding each trit using two qubits. We remark however that these numbers can be improved significantly.}. Let $\beta(c)>0$ be such that $A\beta(c)\ln(1/\beta(c)) + B\beta(c)= c$. Then, the quantum device can solve PBS for all $r\leq \tilde r=\beta(c)n - O(\log n)$. For completeness, it can be shown that 
$
	\beta(c) = -c/(AW(-ce^{-B/A}/A))
$,
where $W$ is the principal branch of the Lambert $W$ function (note that $W(x)<0$ for $x\in(-1/e,0)$). We will not need the precise form of $\beta(c)$, it is easy to see, however, that for small values of $c$, $\beta(c) =\Theta(c/\log(1/ c))$. 

Suppose now that $\textsc{QFastBall}$ calls $\textsc{QBall}$ with radius $\rcall$. Clearly, $\tilde r \geq\rcall > \tilde r - \Delta$ for each call to $\qball$. Note that the value of $\rcall$ could be different for each call to $\qball$ depending on whether we call it deep in $\promiseball$ (Case 1) or in Case 2. The number of calls to $\qball$ is however at most $O((2t^{2/\Delta})^{r-(\tilde r - \Delta)})$, since the number of leaves in the recursion tree, which has depth at most $r-(\tilde r - \Delta)$,  before $\qball$ is called, is bounded by this quantity. The runtime of $\textsc{QFastBall}$ is therefore given by product of this quantity and the runtime $O^\ast(3^{\rcall/2})$ of $\qball$, \textit{i.e.}, at most
$
	O^\ast((2t^{2/\Delta})^{r-(\tilde r - \Delta)} \cdot 3^{\tilde r/2})
	=
	O^\ast ((2t^{2/\Delta})^{r-\tilde r} \cdot 3^{\tilde r/2})
$, 
since $\Delta$ and $t$ are constants. Plugging in the expression for $\tilde r$ and noting that the $O(\log n)$ contribution is absorbed by the $O^\ast$ notation, we obtain a runtime of
\begin{align}
	T_{2,\textsc{QFastBall}}(\rho,n)&=O^\ast\left((2+\epsilon)^{\rho n} \left( \frac{\sqrt 3}{2+\epsilon}\right)^{\beta(c) n}\right) \\
	&=
	O^\ast\left((2+\epsilon)^{\rho n} 2^{-f(c) n} \right) \label{eq:runtimeqfastball}
\end{align}
for solving PBS with $n$ variables and radius $r$ using $\textsc{QFastBall}$ with a quantum device of $M=cn$ qubits, where as before $\epsilon$ can be made arbitrarily small, and $f(c) = (1-\log_2\sqrt 3)\beta(c)\approx 0.21\beta(c)$, where in the last step, we bounded $\sqrt 3/(2+\epsilon)< \sqrt 3/2$.  %

 \subsubsection{Total runtime for 3SAT}
\label{SM4}

We now estimate the runtime of the entire algorithm for solving 3SAT using the space splitting algorithm in combination with $\textsc{QFastBall}$. Substituting \eqref{eq:runtimeqfastball} into \eqref{eq}, and recalling that we can choose $d=O(1)$ such that the second term in \eqref{eq} is the dominating term of $T(n)$, we obtain
\begin{equation}\label{eq:ttot}
	T(n) = O^\ast\left( 2^{(1-h(\rho))n}(2+\epsilon)^{\rho n} 2^{-f(c)n} \right).
\end{equation}
Note that the last factor in \eqref{eq:ttot}, induced by the quantum enhancement, is independent of $\rho$, and that the first two factors together constitute the runtime of the space splitting algorithm using just $\fastball$. The optimal value of $\rho$ for \eqref{eq:ttot} is therefore the same as the optimal value of $\rho$ for just using $\fastball$, which (up to corrections in $\epsilon$) is $\rho=1/3$. Using this value, we obtain a total runtime of
\begin{equation}
	T(n) = O^\ast \left( 2^{(\gamma_0+\varepsilon-f(c))n}\right),
\end{equation}
where $\varepsilon= \log_2(1+\epsilon/2)/3$ can be made arbitrarily small, as stated in the main body of the paper.
\end{document}